\def\E{\mathbb{E}}
\def\P{\mathbb{P}}
\newtheorem{remark}{Remark}
\newtheorem{theorem}{Theorem}
\newtheorem{lemma}{Lemma}
\newtheorem{h}{Hypothesis}
\newcommand{\keywords}[1]{\par\addvspace\baselineskip
\noindent\enspace\ignorespaces#1}
\begin{document}

\title{On the implied volatility of Asian options under stochastic volatility models}

\author{Elisa Al\`os$^{\ast}$, Eulalia Nualart$^{\ast}$ and Makar Pravosud\thanks{ Universitat Pompeu Fabra and Barcelona School of Economics, Department of Economics and Business, Ram\'on Trias Fargas 25-27, 08005, Barcelona, Spain. 
EN acknowledges support from the Spanish MINECO grant PGC2018-101643-B-I00 and
Ayudas Fundacion BBVA a Equipos de Investigaci\'on Cient\'ifica 2017.}}
\maketitle

\begin{abstract}
 In this paper we study the short-time behavior of the at-the-money implied volatility for arithmetic Asian options with fixed strike price.
 The asset price is assumed to follow the Black-Scholes model with a general stochastic volatility process. Using techniques of the Malliavin calculus developed in Al\`os, Garc\'ia-Lorite and Muguruza \cite{Alos2018} we give sufficient conditions on the stochastic volatility in order to compute the level of the implied volatility of the option when the maturity converges to zero. Then, we find a short maturity asymptotic  formula for the skew slope of the implied volatility that depends on the correlation between prices and volatilities and the Hurst parameter of the volatility model. We apply our general results to the SABR and fractional Bergomi models, and provide numerical simulations that confirm the accurateness of the asymptotic formulas. 
\end{abstract}

\keywords{{\bf Keywords:} Stochastic volatility, Asian options, Malliavin calculus, implied volatility}

\section{Introduction}\label{introduction}
This paper is devoted to the study of Asian call options with payoff of the form
$$
\left(\frac{1}{T}\int_0^T S_udu-K\right)_+,
$$
where $T$ denotes the maturity, $S$ the price of the underlying, and $K$ the strike of the contract. Asian options of this type are  extremely important in energy markets for different reasons. From one hand, typical energy transactions use to take place via multiple deliveries. Then, these transactions are priced on average and not only on a terminal price. Secondly, the payoff is less sensitive to extreme market fluctuations, which becomes interesting in non-liquid markets. Finally, these options tend to be cheaper than the corresponding European vanillas.

The aim of this paper is to study the short-time maturity behavior of the at-the-money implied volatility (ATMIV) of
arithmetic Asian options. The study of implied volatility is useful in many ways. Firstly, it can be used to obtain volatilities for pricing OTC options (and other derivatives) with  strikes and maturities that are different from the ones offered by option exchanges. Secondly, the shape of the implied volatility surface can be used to assess the adequacy of an option pricing model. If the option pricing model is adequate, then it should capture the main properties of the empirical implied volatility surface. In particular, one of the key characteristics of the implied volatility is its skew at the short end and one can easily filter the class of suitable models if the theoretical value of the skew is available for the models of interest. Finally, as one will see further in the paper, one can use implied volatility and its skew to efficiently approximate the option price. Last, but not least, due to the smile effect the hedge ratio has to be adjusted to take into account the market skew. As a result, availability of analytical values of the skew can improve the performance of hedging.

The behavior of the implied volatility for vanilla options has been the object of many works (see for example Lee \cite{L} for a basic introduction to this topic). However, the case of exotic options and more specifically Asian options is less studied and the number of exact analytic results is more limited. 

Yang and Ewald \cite{Yang09} compute the implied volatility for OTC traded Asian options under Black-Scholes with constant volatility by combining Monte-Carlo techniques with the Newton method in order to solve nonlinear equations. Approximation methods for pricing Asian options under stochastic volatility models are studied by Forde and Jacquier \cite{Forde2010}.
Chatterjee et al. \cite{Chat2018} develop a Markov chain-based approximation method to price arithmetic Asian options for short maturities under the case of geometric Brownian motion. Fouque and Han \cite{Fouque2003} generalize the dimension reduction technique of Vecer for pricing arithmetic Asian options. They use the fast mean-reverting stochastic volatility asymptotic analysis to derive an approximation to the option price which takes into account the skew of the implied volatility surface. This approximation is obtained by solving a pair of one-dimensional partial differential equations. The methodology requires the key parameters needed in the PDEs to be estimated from the historical stock prices and the implied volatility surface.

Asymptotics of arithmetic Asian implied volatilities have been studied by Pirjol and Zhu \cite{Pirjol2016} in the case of local volatility. In this paper, the authors make use of large deviations techniques to get accurate approximation formulas for the implied volatility, which are shown to be accurate when compared with the Monte-Carlo simulations.
Arithmetic Asian options under the CEV (constant elasticity of variance) model are studied in  Pirjol and Zhu \cite{Pirjol2019}. The leading order short maturity limit of the Asian option prices under the CEV model is obtained in closed form. Authors propose an analytic approximation for the Asian options prices which reproduced the exact short maturity asymptotics. 
Al\`os and Le\'on \cite{Alos2019} compute the short-time level and the skew of the implied volatility of floating strike arithmetic Asian options under the Black-Scholes model with constant volatility by the means of Malliavin calculus.

In this paper, we contribute to the existing literature in several ways. We extend the application of the Malliavin calculus developed in Al\`os, Garc\'ia-Lorite and Muguruza 
\cite{Alos2018} by giving general sufficient conditions on a general stochastic volatility model in order to obtain formulas for the short-time limit of the  at-the-money level and skew of the implied volatility for Asian options. Moreover, we show how studying Asian option under stochastic volatility reduces to the study of European type options where the underlying is respresented by a certain stochastic volatility model, with a modified volatility process which depends on \(T\). This methodology developed in 
\cite{Alos2018} allows to adapt the results on vanilla options to options on a non lognormal-type distribution and it can also be applied to other European-type exotic options. See for example \cite{Alos2018} for an application of this technique to the analysis of the VIX skew.  This method is very classical in mathematical finance, for instance, when working with stochastic rates. The difference in the present paper is that the volatility of the log-forward price (see equation (\ref{x})) depends on the filtration of the Brownian motion $W$, and not only on $W'$. This makes the problem more challenging in order to find general hypotheses on a general stochastic volatility model in order for the asymptotic formulas to hold (see Hypotheses 3 and 4). Typically these hypotheses are in expectation, see for example Al\`os and Shiraya \cite{Alos2019b} and Al\`os and Le\'on \cite{Alos2017}, but in our case they need to hold almost surely in order to check the hypotheses of Theorem 3 (see Section 4.2).

To sum up, we study the short-end behavior of the ATMIV of Asian options for local, stochastic, and fractional volatilities. In particular, we show that
\begin{itemize}
\item The short-end limit of the ATMIV is equal to $\frac{\sigma_0}{\sqrt{3}}$, where $\sigma_0$ denotes the short-end limit of the spot volatility. See equation (\ref{main1}) in Theorem \ref{limskew}.

\item  We compute the the short-end skew of the ATMIV, which depends on the correlation between prices and stochastic volatility and on the Malliavin derivative of the volatility process, which in the case of fractional volatility models will depend on the Hurst parameter $H \in (0,1)$. See equation (\ref{main2}) in Theorem \ref{limskew}. If prices and volatilities are uncorrelated, the short-end skew is equal to $\frac{\sqrt{3}\sigma_0}{30}$. In the case of rough volatilities, that is $H<\frac12$, we observe in equations (\ref{rou}) and (\ref{rou2}) a blow-up that is of the same order as the one we observe in vanilla options (see Al\`os et al. \cite{Alos2007}). 

\item We apply the preceding results to the constant volatility case,  the SABR model,  the fractional Bergomi model, and the local volatility, and perform numerical simulations that confirm the accurateness of the asymptotic formulas. See Section 5. In the case of local volatilities, we verify that our results fit the asymptotic analysis of Pirjol and Zhu \cite{Pirjol2016} and \cite{Pirjol2019}.
\end{itemize}

The paper is organized as follows: in Section 2 we introduce the main elements of the Malliavin calculus needed through the paper, and the main problem, results and notations. In Section 3 we introduce some preliminary results needed for the proof of the  main theorem. In Section 4 we give the proof the main results of the paper. Finally, Section 5 is devoted to the application of the main results to the constant volatility case, the SABR model, the fractional Bergomi model, and the local volatility model, together with some numerical simulations to confirm the accurateness of the asymptotic formulas. The Appendix contains some Malliavin derivatives computations needed through the paper.

\section{Notations and main results}

\subsection{A primer on Malliavin Calculus}

We introduce the elementary notions of the
Malliavin calculus used in this paper (see Nualart and Nualart \cite{Nua18}). Let us consider a standard Brownian motion $Z=(Z_t)_{t \in [0,T]}$ defined on a complete probability space $(\Omega, \mathcal{F}, \mathbb{P})$ and we denote by $\mathcal{F}_t$ the filtration generated by $Z_t$. Let ${\cal S}^Z$ be the set of random variables of the form
\begin{equation}
\label{eq:2.1}
F=f(Z(h_{1}),\ldots ,Z(h_{n})),  
\end{equation}
with $h_{1},\ldots ,h_{n}\in L^2([0,T])$, $Z(h_i)$ denotes the Wiener integral of the function $h_i$, for $i=1,..,n$, and $f\in C_{b}^{\infty }(\mathbb{R}^n) $ 
(i.e., $f$ and all its partial derivatives are bounded). Then the Malliavin 
derivative of $F$, $D^Z F$,  is defined
as the stochastic process given by 
\begin{equation*}
D_{s}^ZF=\sum_{j=1}^{n}{\frac{\partial f}{\partial x_{j}}}(Z(h_{1}),\ldots ,Z(h_{n})) h_j(s), \quad s\in [0,T].
\end{equation*}
This operator is closable from $L^{p}(\Omega )$ to $L^p(\Omega; L^2([0,T])$, for all $p \geq 1$, and we denote by ${\mathbb{D}}_{Z}^{1,p}$ the
closure of ${\cal S}^Z$ with respect to the norm
$$
||F||_{1,p}=\left( \E\left| F\right|
^{p}+\E||D^Z F||_{L^{2}([0,T])}^{p}\right) ^{1/p}.
$$
We also consider the iterated
derivatives \(D^{Z,n}\) for all integers  \(n > 1\) whose domains will be denoted by
\(\mathbb{D}^{n,p}_Z\), for all $p \geq 1$. We will use the notation $\mathbb{L}_Z^{n,p}:=L^p([0,T];{\mathbb{D}}_{Z}^{n,p})$.

\subsection{Statement of the problem and main results}\label{statement-of-the-problem-and-notation}

We denote by $(V_t)_{t \in [0,T]}$ the value of a fixed strike arithmetic Asian call option where $T$ is the maturity.  Then, the payoff can be written as 
$$
V_T=(A_T-K)_{+}, \qquad  A_T=\frac{1}{T}\int_0^T S_t dt,
$$
where $(S_t)_{t \in [0,T]}$ is
the price of the underlying asset  and \(K\) is the
fixed strike price.

For the sake of simplicity, we assume that the interest rate is equal to zero and we consider the following general stochastic volatility model for the underlying asset price
\begin{equation} \begin{split}\label{bspm}
dS_t &= \sigma_tS_tdW_t\\
 W_t &= \rho W_t' + \sqrt{(1-\rho^2)}B_t,
 \end{split}
\end{equation} where $S_0>0$ is fixed, $W_t$, $W_t'$, and $B_t$ are three standard Brownian motions on $[0,T]$
defined on the same complete probability space $(\Omega, \mathcal{G}, \mathbb{P})$. We assume that \(W_t'\) and
\(B_t\) are independent and \(\rho\in [-1,1]\) is the correlation coefficient between \(W_t\) and
\(W_t'\). 

We consider the following assumption on the stochastic volatility of the asset price.
\begin{h} \label{Hyp1}
The process $\sigma=(\sigma_t)_{t \in [0,T]}$ is adapted to the filtration generated by $W'$, a.s. positive and continuous, and satisfies that for all $t \in [0,T]$,
$$
c_1 \leq \sigma_t \leq c_2,
$$
for some positive constants $c_1$ and $c_2$.
\end{h}

\begin{remark}
Hypothesis \ref{Hyp1} may seem too restrictive since it is not satisfied by the stochastic volatility models considered in Section 5. However, we will show that using a truncation argument, Theorem \ref{limskew} is still true in all our examples.
\end{remark}

We define the forward price as the martingale $M_t=\E_t(A_T)$, where $\E_t$ denotes the conditional expectation wrt to the filtration $\mathcal{F}_t$ generated by $W_t$.  Applying the stochastic Fubini's theorem we get that
\begin{align*}
\begin{split}
  A_T=\frac{1}{T}\int_0^T S_t dt &= \frac{1}{T}\int_0^T \left(S_0+\int_0^t \sigma_u S_u dW_u\right) dt = \\
   &= S_0+\frac{1}{T}\int_0^T\sigma_u S_u \left(\int_u^T  dt\right) dW_u = \\
   &= S_0+\frac{1}{T}\int_0^T (T-u)\sigma_u S_u dW_u,
\end{split}
\end{align*}
which implies that 
\begin{equation}\label{dACE}
dM_t=\frac{\sigma_t S_t(T-t)}{T} dW_t = \phi_t M_t dW_t,
\end{equation} where $$\phi_t := \frac{\sigma_t S_t(T-t)}{T M_t}.$$ Furthermore, the log-forward price $X_t=\log(M_t)$ satisfies
\begin{equation} \label{x}
dX_t=\phi_t dW_t-\frac12 \phi_t^2 dt.
\end{equation}

\begin{remark} \label{r1}
One can easily check  that  Hypothesis \ref{Hyp1} implies that $\phi_t$ is positive a.s. and belongs to $L^p([0,T]\times \Omega)$, for all $p \geq 2$. In fact, Hypothesis \ref{Hyp1} implies that for all $p \geq  2$,
$S_t$ belongs to $L^p([0,T] \times \Omega)$, $A_T$ belongs to $L^p(\Omega)$, and $M_t^{-1}$ belongs to $L^p([0,T] \times \Omega)$.
\end{remark}

\begin{remark} \label{r11}
Notice that $\phi_0=\sigma_0$. Moreover, 
\begin{equation} \label{er}
M_t=\frac{1}{T}\left(\int_0^t S_udu+S_t(T-t)\right).
\end{equation}
Then, $TM_t\geq S_t(T-t)$, and this implies that $\phi_t\leq \sigma_t$ almost surely.
\end{remark}

The goal of this paper is to study the implied volatility of the Asian call option $V_t$ which is defined as follows. We denote by \(BS(t,x,k,\sigma)\)  the classical Black-Scholes price of a
European call with time to maturity \(T-t\), log-forward price \(x\),
log-strike price \(k\) and volatility \(\sigma\). That is, \begin{align*}
\begin{split}
  BS(t,x,k,\sigma)&=e^x N(d_{+}(k,\sigma))-e^k N(d_{-}(k,\sigma)),\\
  d_{\pm}(k,\sigma)&=\frac{x-k}{\sigma\sqrt{T-t}}\pm\frac{\sigma}{2}\sqrt{T-t},
\end{split}  
\end{align*} where \(N\) is the cumulative distribution
function of the standard normal random variable.

Next, we observe that, as $BS(T,x,k,\sigma)=(e^x-e^k)_+$ for every $\sigma>0$,  the price of  our Asian call option $V_t=\E_t(e^{X_T}-e^k)_+$ can be written as 
\begin{equation} \label{vt}
V_t=\mathbb{E}_t(BS(T,X_T,k,v_T)), \qquad v_t=\sqrt{\frac{1}{T-t}\int_t^T \phi_s^2ds}.
\end{equation}
 In particular,
\(V_T=BS(T,X_T,k,v_T)\).
Then, we define the implied volatility of the option as \(I(t,k)=BS^{-1}(t,X_t,k,V_t)\), and we denote by \(I(t,k^{*}_t)\), where $k^{\ast}_t=X_t$, the corresponding ATMIV which, in the case of zero interest rates, takes the form
\(BS^{-1}(t,X_t,X_t,V_t)\).

We apply  the Malliavin calculus techniques developed in Al\`os, Garc\'ia-Lorite and Muguruza 
\cite{Alos2018} in order to obtain formulas for  $$\lim_{T\to 0}I(0,k^{*}) \quad \text{and} \quad 
\lim_{T\to 0}\partial_kI(0,k^{*})$$ under the general stochastic volatility model (\ref{bspm}), where we have set $k^{\ast}=k_0^{\ast}$ for the sake of simplicity.

In our setting, since we have three  Brownian motions $W, W'$ and $B$, if $h$ is a random variable in $L^2([0,T])$, then we have in view of relation (\ref{bspm}) that
$$
W(h)=\rho W'(h)+\sqrt{(1-\rho^2)} B(h).
$$
Then, a random variable in $\mathbb{D}^{1,2}_{W'}\cap \mathbb{D}^{1,2}_B$ is also in $\mathbb{D}^{1,2}_{W}$. In fact, it is easy to see that if $X$ is a random variable in ${\cal S}^W$, then 
\begin{equation}
D^{W}X= \rho D^{W'}X+\sqrt{1-\rho^2}D^{B}X.
\end{equation}
Thus, we deduce that for all $X\in \mathbb{D}^{1,2}_{W'}\cap \mathbb{D}^{1,2}_B$,
 \begin{align}
  D^{W}X= \rho D^{W'}X+\sqrt{1-\rho^2}D^{B}X.
\label{derprod}
\end{align}

We will need the following additional assumption on the Malliavin differentiability of the stochastic volatility process.
\begin{h}\label{Hyp2}
$\sigma \in \mathbb{L}^{2,p}_{W'}$, for all $p \geq 2$.
\end{h}

\begin{remark} \label{r2}
Hypotheses 1 and 2 imply that $\phi_t$ belongs to $\mathbb{L}^{2,p}_{W}$ and $A_T$ belong to $\mathbb{D}^{2,p}_{W}$ for all $p \geq 2$. This  hypothesis on $A_T$ corresponds to {\bf (H1)} in Al\`os, Garc\'ia-Lorite and Muguruza \textnormal{\cite{Alos2018}}.
\end{remark}

In order to give the asymptotic skew of the implied volatility as a function of the roughness of the stochastic volatility process we consider the following assumption.
\begin{h}\label{Hyp3}
There exists  $H\in (0,1)$ such  that for all $0\leq s \leq r \leq t \leq T $ 
\begin{equation} \label{d1}
\vert D_r^{W'}\sigma_t \vert  \leq M_{r,t} (t-r)^{H-\frac{1}{2}}
\end{equation}
and
\begin{equation} \label{d2}
\vert D_s^{W'} D_r^{W'}\sigma_t \vert \leq N_{s,r,t} (t-r)^{H-\frac{1}{2}} (t-s)^{H-\frac{1}{2}},
\end{equation}
where  $M_{r,t}$ and $N_{s,r,t}$ are positive random variables satisfying for all $p\geq  1$,
$$
\E(\sup_{0\leq r \leq t \leq T \leq 1} M_{r,t}^p) \leq c_1,
$$
and
$$
 \E(\sup_{0\leq s \leq r \leq t \leq T \leq 1} N_{s,r,t}^p) \leq c_2,
$$
for some positive constants $c_1$ and $c_2$.
\end{h}

Finally, we will need the following additional assumption on the continuity of the paths of the volatility process.
\begin{h}\label{Hyp4}
There exists $\gamma \in (0, H)$ such that for all $0\leq s\leq r\leq T \leq 1$
$$|\sigma_r-\sigma_s| \leq  K_{r,s}(r-s)^{\gamma},$$
where $K_{r,s}$ is a positive random variable satisfying for all $p \geq 1$,
$$\E(\sup_{0\leq r \leq t \leq T \leq 1}   K_{r,s}^p) \leq c
$$
where $c>0$ and $H$ is the Hurst parameter form Hypothesis \textnormal{3}.
\end{h}

We next provide the the main result of this paper, which is the short-time ATMIV level and skew of an Asian call option under the general volatility model (\ref{bspm}).
\begin{theorem}  \label{limskew}
Assume Hypotheses 1-4. Then, 
\begin{align} \label{main1}
\lim_{T\to 0}I(0,k^*)=\frac{\sigma_0}{\sqrt{3}}
\end{align}
and
\begin{equation} \label{main2}\begin{split}
&\lim_{T \to 0} T^{\max(\frac12-H, 0)}\partial_kI(0,k^{*}) \\
 &=\lim_{T \to 0} T^{\max(\frac12-H, 0)}\frac{3 \sqrt{3}\rho}{ \sigma_0 T^5}\int_0^T\left( (T-r)\int_r^T(T-u)^2 \E(D_r^{W'}\sigma_u) du \right) dr \\
 &\qquad \qquad +\lim_{T \to 0} T^{\max(\frac12-H, 0)}\frac{\sqrt{3}\sigma_0 }{30},
 \end{split}
\end{equation}
and the limit on the right hand side of\textnormal{(\ref{main2})} is finite.
\end{theorem}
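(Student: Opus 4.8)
The plan is to use representation (\ref{vt}) to view the Asian option as a European-type option written on the log-forward $X$ with the random, $T$-dependent volatility $\phi$, and then to invoke the Malliavin decomposition of Al\`os, Garc\'ia-Lorite and Muguruza \cite{Alos2018}, which is available here because Remark \ref{r2} supplies its standing assumption (H1), namely $A_T\in\mathbb{D}^{2,p}_W$. Concretely, I would start from a decomposition of the form
\begin{equation*}
V_0=\E\big(BS(0,X_0,k,v_0)\big)+\tfrac12\,\E\!\left(\int_0^T \partial_x\big(\partial_{xx}-\partial_x\big)BS(s,X_s,k,v_s)\,U_s\,ds\right),\qquad U_s:=\int_s^T D_s^{W}\phi_\theta^2\,d\theta,
\end{equation*}
in which the full correlation between $\phi$ and the driving noise $W$ is encoded in $D^W\phi^2$ via (\ref{derprod}). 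The whole problem then reduces to two short-maturity analyses: that of $v_0$ (for the level) and that of the correction integral $U_s$ (for the skew).

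For the level (\ref{main1}) the key input is the pathwise expansion $\phi_s=\sigma_0\,\frac{T-s}{T}+o(1)$ as $T\to0$, which follows from $\sigma_s\to\sigma_0$ (Hypothesis 4), $S_s\to S_0$ and $M_s\to S_0$ (Remark \ref{r11} together with the $L^p$ bounds of Remarks \ref{r1}--\ref{r2}). Writing $\phi_s^2=\sigma_s^2 S_s^2(T-s)^2/(T^2M_s^2)$ and substituting, I would obtain
\begin{equation*}
v_0^2=\frac{1}{T^3}\int_0^T (T-s)^2\,\sigma_s^2\,\frac{S_s^2}{M_s^2}\,ds \;\longrightarrow\; \frac{\sigma_0^2}{T^3}\int_0^T(T-s)^2\,ds=\frac{\sigma_0^2}{3},
\end{equation*}
so that $v_0\to\sigma_0/\sqrt3$, in fact as a deterministic $L^p$-limit. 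I would then show that the correction term in the decomposition is of strictly lower order than the leading Black--Scholes term, so that $I(0,k^*)=BS^{-1}(0,X_0,X_0,V_0)$ inherits the same limit, which gives (\ref{main1}).

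For the skew I would differentiate the identity $V_0=BS(0,X_0,k,I(0,k))$ implicitly in $k$,
\begin{equation*}
\partial_k I(0,k)=\frac{\partial_k V_0-\partial_k BS(0,X_0,k,I)}{\partial_\sigma BS(0,X_0,k,I)},
\end{equation*}
evaluate at $k=k^*=X_0$, insert the decomposition for $V_0$, and let $T\to0$ using $v_0\to\sigma_0/\sqrt3$ and the at-the-money values of the Black--Scholes Greeks (notably the vega $\partial_\sigma BS\sim S_0\sqrt{T/2\pi}$). The heart of the matter is the behaviour of $U_s$, for which I would split, using (\ref{derprod}) and the $W'$-measurability of $\sigma$,
\begin{equation*}
D^W_s\log\phi_\theta=\rho\,\sigma_\theta^{-1}D^{W'}_s\sigma_\theta+\big(D^W_s\log S_\theta-D^W_s\log M_\theta\big).
\end{equation*}
To leading order the first summand is $\rho\,\sigma_0^{-1}D^{W'}_s\sigma_\theta$, carrying the explicit factor $\rho$; the second summand is $\rho$-free and, using $D^W_s S_\theta\sim\sigma_0 S_0$ and $D^W_s M_\theta\sim\sigma_0 S_0(T-s)/T$ (Appendix computations), reduces to $\sigma_0\,s/T$. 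The correlated part produces, after integrating against the leading order $\phi_\theta\sim\sigma_0(T-\theta)/T$ and the weight $(T-s)$ coming from the ATM Greek, the $\E(D^{W'}_r\sigma_u)$ integral in (\ref{main2}); the intrinsic part survives even when $\rho=0$ and yields the constant $\tfrac{\sqrt3\sigma_0}{30}$. It is precisely this $\rho$-free contribution, absent for vanilla options, that distinguishes the Asian skew.

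Finally, I would control both contributions by the Malliavin-norm estimates: the bound (\ref{d1}) shows that the correlated term is governed by quantities of order $(u-r)^{H-\frac12}$, and a change of variables $u=r+(T-r)w$, $r=T\tilde r$ reveals that it scales like $T^{H-\frac12}$, which blows up in the rough regime $H<\frac12$ exactly as for vanilla options \cite{Alos2007} and dictates the normalizing power of $T$ in (\ref{main2}); the bound (\ref{d2}) is then needed to discard the second-order Malliavin remainders, and Hypothesis 4 to replace $\sigma_s$ by $\sigma_0$ with vanishing error. I expect the main obstacle to be exactly this last bookkeeping: one must verify that every remainder generated by the expansions of $\phi$, $S$, $M$ and of their Malliavin derivatives is genuinely of smaller order than the retained $O(T^{H-\frac12})$ and $O(1)$ terms, uniformly in $s,\theta\in[0,T]$, so that the limits may be passed inside the integrals and the finiteness assertion in (\ref{main2}) is justified.
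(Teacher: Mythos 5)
Your overall route is the same as the paper's: represent the Asian option via \eqref{vt} as a European-type claim on $X$ with the $T$-dependent volatility $\phi$, invoke the Malliavin machinery of \cite{Alos2018} (the paper does this by verifying the hypotheses of its quoted Theorems \ref{theorem8A} and \ref{theorem8B}, which are themselves proved from the decomposition you write down), expand $\phi$ and $D^W\phi^2$ to leading order, and compute the two limits. Your identification of the correlated and $\rho$-free contributions and the $T^{H-\frac12}$ scaling is also correct. However, there are two concrete gaps. First, your decomposition is misquoted: the correction kernel is $\Lambda_s=\phi_s\int_s^T D_s^W\phi_\theta^2\,d\theta$, not $U_s=\int_s^T D_s^W\phi_\theta^2\,d\theta$. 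The factor $\phi_s\sim\sigma_0(T-s)/T$ is exactly what produces the weight $(T-r)$ in \eqref{main2} and turns the $\rho$-free Beta integral into $\int_0^T(T-s)^4 s\,ds=T^6/30$, whence $\frac{\sqrt3\sigma_0}{30}$; you instead attribute a $(T-s)$ weight to the at-the-money Greek, which is wrong — the conditional expectation of $\partial_x(\partial_{xx}-\partial_x)BS$ contributes \emph{negative} powers of $v_s\sqrt{T-s}$, not positive ones. Executed literally, your bookkeeping would yield the wrong kernel and the wrong constant; the compensation you make is unjustified.

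Second, and more seriously, your plan never confronts the degeneracy that makes the Asian case genuinely harder than vanillas: here $v_s^2=\frac{1}{T-s}\int_s^T\phi_r^2\,dr\approx\frac{\sigma_0^2(T-s)^2}{3T^2}$ vanishes as $s\to T$, whereas every "the correction is of lower order" claim (the paper's conditions \eqref{j1}--\eqref{j5}) requires expectations of quantities like $v_s^{-2}(T-s)^{-1}\Lambda_s$ or $v_0^{-3}\Lambda_s$, i.e.\ inverse moments of a random quantity that degenerates near maturity. The paper's main technical work beyond citing \cite{Alos2018} is precisely Lemma \ref{add}, the bound $\left(\E\left[v_t^{-2p}\right]\right)^{1/p}\leq c_p\,T^{2}/(T-t)^{2}$ obtained from exponential moments of $\sup_t\left|\int_0^t\sigma_s\,dW_s\right|$ under Hypothesis 1, together with Lemma \ref{l3} for $\E(M_t^{-p})$; without such estimates none of your order counts can be turned into statements about expectations, and the finiteness assertion in \eqref{main2} cannot be closed. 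Your closing remark that the "main obstacle" is verifying that remainders are of smaller order points in the right direction, but the missing idea is this specific negative-moment control, not merely uniform bookkeeping of the expansions.
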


We observe that the level (\ref{main1}) is independent of the  correlation $\rho$ and the Hurst parameter $H$, and coincides with the constant volatility  case, see Pirjol and Zhu \cite{Pirjol2016} and Al\`os and Le\'on \cite{Alos2019}. Observe  also  that it coincides with the a.s. limit of $v_0$. In fact, by Hypothesis 1 and since $S_0=M_0$ we have that a.s.
\begin{equation} \label{limit}
\lim_{T \rightarrow 0} v_0=\lim_{T \rightarrow 0} \sqrt{\frac{1}{T^3}\int_0^T \frac{\sigma_s^2 S_s^2(T-s)^2}{ M_s^2} ds}=\frac{\sigma_0}{\sqrt{3}}.
\end{equation}

The skew (\ref{main2}) depends on the correlation parameter $\rho$ and on the Hurst parameter $H$. When prices and volatilities are uncorrelated  then the short-time skew equals
$\frac{\sqrt{3}\sigma_0 }{30}$, which again coincides with the constant volatility  case, see Pirjol and Zhu \cite{Pirjol2016} and Al\`os and Le\'on \cite{Alos2019}. Observe also that if the term $\E(D_r^{W'}\sigma_u)$ is of order $(u-r)^{H-\frac12}$ (see Hypothesis 3), then the  limit of the right hand side of (\ref{main2}) will be $0$ if $H>1/2$ and it will converge to a constant when $H=\frac12$. When $H <\frac12$ we need to multiply by $T^{\frac12-H}$ in order to obtain a finite limit.
 This is because when  $H>1/2$, the fractional Brownian motion is smoother than standard Brownian motion and the effect  of the stochastic volatility on the short-time implied volatility will be the  same as it was constant, while when  $H<1/2$, the fractional Brownian motion is rougher than standard Brownian motion and we obtain the same effect as in the  case of vanilla options, see Al\`os et al. \cite{Alos2007}.

The results of Theorem  \ref{limskew} can be used in order to derive an approximation formula for the price of an Asian call option. By definition, the price of the Asian call option writes as
$$
V_0 = BS(0,X_0,k,I(0,k))
$$
Then, using Taylor's formula we can use the approximation
\begin{equation} 
\label{IVProxy}
I(0,k) \approx I(0,k^{*}) + \partial_kI(0,k^{*})(k-k^{*}).
\end{equation}
Of course, this approximation is only linear and can be expected to have a limited validity, restricted to a narrow region around the ATM point. One would expect to obtain better results if one has a short maturity asymptotic formula for the curvature 
$\partial^2_k I(0,k^{\ast})$. The
short-time maturity asymptotics for the ATM curvature of the implied volatility
for European calls under general stochastic volatility models is computed in Al\`os and Le\'on \cite{Alos2017}. A second order Taylor
expansion for short maturity limit of the implied volatility for Asian options around the ATM point when the underlying asset
follows a local volatility model is obtained in Proposition 19 of Pirjol and Zhu \cite{Pirjol2016}. In the constant volatility case, the short maturity limit is known for all $k$ (see Proposition 8 of \cite{Pirjol2016}), and the short-maturity limit of the skew and the curvature for $k^{\ast}$ are known in closed from  Pirjol \cite{Pirjol2023}.
In our setting, computing the curvature and a second order approximation of the price around the ATM point
is more challenging and we leave it for further work. 

\section{Preliminary results}

We start quoting the  two main results obtained in  Al\`os et al. \cite{Alos2018} that will be  crucial for the proof of our main Theorem and use the general framework detailed in Section 2.2.
The first result is Theorem 6 in Al\`os et al. \cite{Alos2018} which shows that the short-time limit of the ATMIV equals the  short-time limit of  the future average of the volatility of the log forward price.
\begin{theorem} \label{theorem8A}
Assume that for all $p >1$, $A_T \in \mathbb{D}^{2,p}_W$, $M_t^{-1} \in L^p([0,T]\times \Omega)$, and
\begin{align}
\lim_{T\to 0}& \mathbb{E} \left(\int_0^T\frac{\Lambda_s}{v_s^2(T-s)}ds\right)  = 0,\label{j1}\\
\lim_{T\to 0}& \frac{1}{T^2}\mathbb{E}\left(\frac{1}{v_0} \int_0^T \left(\int_s^T D^W_s\phi_r^2dr\right)^2 ds \right) = 0,\label{j2}
\end{align}
where $\Lambda_s=\phi_s\int_s^T D_s^W\phi_r^2dr$.
Then,
$$
  \lim_{T\to 0} I(0,k^*)=\lim_{T\to 0}\mathbb{E}(v_0).
$$
\end{theorem}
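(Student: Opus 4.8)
The plan is to connect the implied volatility to the option price through the Black--Scholes inverse and to estimate that price by an anticipating (Al\`os-type) It\^o decomposition. At the money $k^{*}=X_0$ we have $V_0=BS(0,X_0,k^{*},I(0,k^{*}))$, and since $BS$ is smooth and strictly increasing in its volatility argument, the mean value theorem gives, for some $\xi_T$ between $I(0,k^{*})$ and $\E(v_0)$,
$$
I(0,k^{*})-\E(v_0)=\frac{V_0-BS(0,X_0,k^{*},\E(v_0))}{\partial_\sigma BS(0,X_0,k^{*},\xi_T)}.
$$
I would therefore show the right-hand side tends to $0$, using that the standing integrability assumptions keep the relevant volatilities in a bounded range, so that $\partial_\sigma BS(0,X_0,k^{*},\xi_T)$ is comparable to the at-the-money vega $\partial_\sigma BS(0,X_0,k^{*},v_0)$, which is of order $c\,e^{X_0}\sqrt{T}$ as $T\to0$. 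I would then split the numerator as
$$
V_0-BS(0,X_0,k^{*},\E(v_0))=\big[V_0-\E(BS(0,X_0,k^{*},v_0))\big]+\big[\E(BS(0,X_0,k^{*},v_0))-BS(0,X_0,k^{*},\E(v_0))\big],
$$
a \emph{decomposition error} and a \emph{convexity (Jensen) gap}, and estimate each after dividing by the vega.

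For the first bracket I would start from $V_0=\E(BS(T,X_T,k^{*},v_T))$ and write $Y_t=\int_t^T\phi_s^2\,ds=(T-t)v_t^2$, so that $BS(t,X_t,k^{*},v_t)=G(X_t,Y_t)$, where $G(x,y)$ is the Black--Scholes price as a function of log-price and remaining total variance and solves $\partial_yG=\tfrac12(\partial_{xx}-\partial_x)G$. Applying the anticipating It\^o formula to $G(X_t,Y_t)$ on $[0,T]$ (legitimate because $A_T\in\mathbb{D}^{2,p}_W$ and $M_t^{-1}\in L^p$ make the Malliavin derivatives and Skorohod integrals well defined and integrable), the drift $dX_t=\phi_t\,dW_t-\tfrac12\phi_t^2\,dt$ together with $\dot Y_t=-\phi_t^2$ and the heat-type equation makes all the $dt$ ``bulk'' terms cancel exactly, leaving a mean-zero stochastic integral and a single anticipating correction coming from the interaction of $\partial_xG\,\phi_t\,dW_t$ with the anticipating $Y_t$. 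Taking expectations yields
$$
V_0=\E(BS(0,X_0,k^{*},v_0))+\E\left(\int_0^T\partial_{xy}G(X_t,Y_t)\,\Lambda_t\,dt\right),\qquad \Lambda_t=\phi_t\int_t^T D_t^W\phi_r^2\,dr,
$$
exactly as in the statement. Using the vega--gamma identity $\partial_\sigma BS=\sigma(T-t)(\partial_{xx}-\partial_x)BS$ to rewrite $\partial_{xy}G$ through the vega, one finds $\partial_{xy}G/(\text{vega})$ is of order $1/(v_s^2(T-s))$ up to bounded factors, so that dividing this term by the at-the-money vega produces precisely the quantity in (\ref{j1}), which vanishes by hypothesis.

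For the convexity gap I would Taylor-expand $BS$ in its volatility argument to second order around $\E(v_0)$, bounding the bracket by $\tfrac12\sup_\sigma|\partial_{\sigma\sigma}BS|\cdot\mathrm{Var}(v_0)$. I would control the variance by the Poincar\'e (Clark--Ocone) inequality $\mathrm{Var}(v_0)\le\E\int_0^T(D_s^Wv_0)^2\,ds$ and compute, from $v_0=\sqrt{T^{-1}\int_0^T\phi_r^2\,dr}$ and the causality $D_s^W\phi_r^2=0$ for $r<s$, that $D_s^Wv_0=\frac{1}{2Tv_0}\int_s^T D_s^W\phi_r^2\,dr$. Since at the money $\partial_{\sigma\sigma}BS$ is of order $T\cdot(\text{vega})$, dividing the convexity gap by the vega gives, up to constants, exactly the expression in (\ref{j2}), which vanishes as well. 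Combining the two estimates shows the normalized numerator tends to $0$, hence $I(0,k^{*})-\E(v_0)\to0$, which is the claim.

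The hard part will be the careful bookkeeping of the Black--Scholes Greeks as $T\to0$: at the money $\partial_{xx}G$, $\partial_{xy}G$ and $\partial_{\sigma\sigma}BS$ all blow up, and the entire content of (\ref{j1}) and (\ref{j2}) is that the products with $\Lambda_t$ and with $\mathrm{Var}(v_0)$ decay fast enough to beat this blow-up once normalized by the vanishing vega. Matching the exact powers of $T$ and $v_0$ in the denominators of (\ref{j1}) and (\ref{j2}) must be carried out in tandem with justifying the anticipating It\^o formula itself --- in particular verifying that $G(X_t,Y_t)$ is smooth enough away from maturity, that the precise anticipating correction is $\int_0^T\partial_{xy}G\,\Lambda_t\,dt$ and not some other trace term, and that all the Malliavin-differentiability and integrability requirements stemming from $A_T\in\mathbb{D}^{2,p}_W$ and $M_t^{-1}\in L^p([0,T]\times\Omega)$ hold uniformly in $T$.
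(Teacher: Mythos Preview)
The paper does not actually prove this theorem: it is quoted verbatim as ``Theorem 6 in Al\`os et al.\ \cite{Alos2018}'' and used as a black box in the proof of Theorem~\ref{limskew}. So there is no in-paper proof to compare against; what one can compare to is the methodology of the cited reference.

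Your approach is precisely that methodology. The decomposition $V_0=\E\big(BS(0,X_0,k^*,v_0)\big)+\tfrac12\,\E\!\int_0^T(\partial_{xxx}-\partial_{xx})BS(s,X_s,k^*,v_s)\,\Lambda_s\,ds$ is the generalized Hull--White/Al\`os formula (cf.\ \cite{Alos2006}), obtained exactly as you describe via the anticipating It\^o formula applied to $G(X_t,Y_t)$ with $Y_t=\int_t^T\phi_r^2\,dr$; the cancellation of the drift through the Black--Scholes PDE $\partial_yG=\tfrac12(\partial_{xx}-\partial_x)G$ and the identification $\partial_{xy}G=\tfrac12(\partial_{xxx}-\partial_{xx})BS$ are correct. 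Dividing this residual by the ATM vega and bounding the Greeks ratio is what produces the integrand $\Lambda_s/(v_s^2(T-s))$ in \eqref{j1}. Your treatment of the Jensen gap via a second-order Taylor expansion in $\sigma$ together with the Clark--Ocone/Poincar\'e bound $\mathrm{Var}(v_0)\le\E\!\int_0^T(D_s^Wv_0)^2\,ds$ and $D_s^Wv_0=\tfrac{1}{2Tv_0}\int_s^T D_s^W\phi_r^2\,dr$ is also the standard route. One small bookkeeping point: with $\partial_{\sigma\sigma}BS|_{\mathrm{ATM}}\sim -\tfrac{\sigma T}{4}\cdot(\text{vega})$, your Jensen-gap-over-vega is of order $T\cdot\mathrm{Var}(v_0)\lesssim \tfrac{1}{T}\,\E\!\int_0^T v_0^{-2}\big(\int_s^T D_s^W\phi_r^2\,dr\big)^2ds$, which is actually a factor $T$ \emph{smaller} than the quantity in \eqref{j2} (since $v_0$ stays of order one); so \eqref{j2} is more than enough for your argument, and the exact powers in \eqref{j2} are tuned to the slightly different organization of estimates in \cite{Alos2018}. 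Overall your sketch is correct and aligned with the source.
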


The second result is Theorem 8 of Al\`os et al. \cite{Alos2018} which gives an approximation formula for the short-time limit of the ATMIV skew.
\begin{theorem} \label{theorem8B}
Assume that for all $p >1$, $A_T \in \mathbb{D}^{3,p}_W$, $M_t^{-1} \in L^p([0,T]\times \Omega)$, hypotheses \textnormal{(\ref{j1})} and \textnormal{(\ref{j2})} are satisfied,
\begin{align}
\lim_{T\to 0}& \frac{T^{\max(\frac12-H, 0)}}{\sqrt{T}}\mathbb{E} \left(\int_0^T( v_s^2(T-s))^{-3} \Lambda_s\left( \int_s^T \Lambda_r dr \right)  ds \right) = 0,\label{j3}\\
\lim_{T\to 0}& \frac{T^{\max(\frac12-H, 0)}}{\sqrt{T}}\mathbb{E} \left(\int_0^T( v_s^2(T-s))^{-2} \phi_s\left( \int_s^T D^W_s\Lambda_r dr \right)  ds \right)= 0,\label{j4}
\end{align}
and
\begin{equation} \label{j5}
\lim_{T\to 0} \frac{T^{\max(\frac12-H, 0)}}{T^2}\mathbb{E}\left(\frac{1}{v_0^3} \int_0^T \Lambda_s ds \right) <\infty.
\end{equation}
Then,
$$
  \lim_{T\to 0} T^{\max(\frac12-H, 0)}\partial_k I(0,k^*)=\frac12 \lim_{T\to 0}\frac{T^{\max(\frac12-H, 0)}}{T^2}\mathbb{E}\left(\frac{1}{v_0^3} \int_0^T \Lambda_s ds \right).
$$
\end{theorem}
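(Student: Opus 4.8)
The plan is to prove the skew formula by combining the Al\`os--Garc\'ia-Lorite--Muguruza decomposition of the option price with an implicit differentiation of the Black--Scholes pricing identity, and then performing a careful at-the-money expansion of the Black--Scholes Greeks as $T\to 0$. The backbone is the first-order decomposition formula, which I would establish (or invoke from the anticipating Malliavin calculus) in the form
\begin{equation*}
V_0=\E\left[BS(0,X_0,k,v_0)\right]+\frac12\,\E\left[\int_0^T (\mathcal{L}\,BS)(s,X_s,k,v_s)\,\Lambda_s\,ds\right],
\end{equation*}
where $\mathcal{L}=\partial_x(\partial_{xx}-\partial_x)$ and $\Lambda_s=\phi_s\int_s^T D_s^W\phi_r^2\,dr$. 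This is obtained by applying It\^o's formula to $BS(t,X_t,k,v_t)$ between $0$ and $T$, using $V_T=BS(T,X_T,k,v_T)$ together with the Black--Scholes PDE $\partial_t BS+\tfrac12\sigma^2(\partial_{xx}-\partial_x)BS=0$ to cancel the drift, and then rewriting the anticipating stochastic integral coming from the cross-variation of $X$ with the future variance $\int_\cdot^T\phi_r^2\,dr$ as a Lebesgue integral via the Malliavin duality (integration-by-parts) formula. This last step is exactly where $A_T\in\mathbb{D}^{2,p}_W$ and $M_t^{-1}\in L^p$ enter, and the correlation $\rho$ is absorbed into $\Lambda_s$ through $D^W\sigma=\rho D^{W'}\sigma$.

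To isolate the skew I would differentiate the defining relation $V_0=BS(0,X_0,k,I(0,k))$ in $k$ and solve for the derivative,
\begin{equation*}
\partial_k I(0,k)=\frac{\partial_k V_0-(\partial_k BS)(0,X_0,k,I)}{(\partial_\sigma BS)(0,X_0,k,I)},
\end{equation*}
evaluated at $k=k^*=X_0$. Differentiating the decomposition in $k$ splits the numerator into a ``level'' contribution, namely $\E[(\partial_k BS)(0,X_0,k^*,v_0)]$ measured against $(\partial_k BS)(0,X_0,k^*,I)$, and the $k$-derivative of the correlation term. Using the level theorem (Theorem~\ref{theorem8A}), which gives $I(0,k^*)\to\E(v_0)$ under hypotheses (\ref{j1}) and (\ref{j2}), a Taylor expansion in the volatility argument shows the level contribution is of higher order in $T$ and does not affect the leading skew; hence
\begin{equation*}
\partial_k I(0,k^*)\sim\frac{\tfrac12\,\partial_k\,\E\left[\int_0^T(\mathcal{L}\,BS)(s,X_s,k^*,v_s)\,\Lambda_s\,ds\right]}{(\partial_\sigma BS)(0,X_0,k^*,I)}.
\end{equation*}

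The remaining task is to replace $(\mathcal{L}\,BS)(s,X_s,k^*,v_s)$ by its frozen value at time $0$ and then evaluate the resulting ratio of Greeks. The freezing is justified by applying the decomposition a second time to the smooth, Black--Scholes-type function $(\mathcal{L}\,BS)(\cdot,X_\cdot,k^*,v_\cdot)$; this second-order expansion produces precisely two remainders, one of the form $\E\int_0^T(v_s(T-s))^{-3}\Lambda_s(\int_s^T\Lambda_r\,dr)\,ds$ and one of the form $\E\int_0^T(v_s(T-s))^{-2}\phi_s(\int_s^T D_s^W\Lambda_r\,dr)\,ds$, which are assumed to vanish after the normalization $T^{-\min(1/2,H)}$ in (\ref{j3}) and (\ref{j4}); the need to differentiate $\Lambda_r$ once more reflects the stronger assumption $A_T\in\mathbb{D}^{3,p}_W$. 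For the frozen main term I would use the at-the-money identities $e^x N'(d_+)=e^k N'(d_-)$, the vega--gamma relation $\partial_\sigma BS=\sigma(T-t)(\partial_{xx}-\partial_x)BS$, and the fact that at the money $d_+=\tfrac{\sigma}{2}\sqrt{T}$; a direct computation of the third-order Greeks then yields
\begin{equation*}
\frac{\partial_k(\mathcal{L}\,BS)(0,X_0,X_0,v_0)}{(\partial_\sigma BS)(0,X_0,X_0,I)}\sim\frac{1}{v_0^3\,T^2},
\end{equation*}
so that the frozen term reduces to $\tfrac12\,T^{-2}\,\E[v_0^{-3}\int_0^T\Lambda_s\,ds]$. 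Multiplying by $T^{\min(1/2-H,0)}$ and invoking (\ref{j5}) to guarantee convergence gives the stated formula.

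I expect the main obstacle to be twofold. First, making the freezing step rigorous: one must control the anticipating integrals in the second-order decomposition and show that, under the scaling $T^{-\min(1/2,H)}$, exactly the two remainders of (\ref{j3}) and (\ref{j4}) survive and nothing else, which requires careful $L^p$ estimates on $D^W\Lambda$ together with uniform control of the rescaled Greeks near the money. Second, the at-the-money Greek asymptotics are delicate because $\mathcal{L}\,BS$ and its $k$-derivative are built from third-order derivatives of $BS$, each singular in $T$; one must track the exact powers of $T$ and $v_0$ so that the singular factors combine into precisely $v_0^{-3}T^{-2}$ and the normalization $T^{\min(1/2-H,0)}$ produces a finite, nonzero limit. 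Once these two points are settled, the Taylor control of the level contribution and the verification that lower-order Greeks do not contribute are routine.
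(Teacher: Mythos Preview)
Your outline is essentially the method by which this statement is proved, but note that the present paper does not supply a proof of Theorem~\ref{theorem8B} at all: it is quoted verbatim from Al\`os, Garc\'ia-Lorite and Muguruza \cite{Alos2018} (their Theorem~8, with the typo in the $T^{-\gamma}$ normalisation corrected in the subsequent remark) as a preliminary tool. So there is no ``paper's own proof'' to compare against here.

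That said, your sketch matches the argument in the cited reference: the anticipating It\^o/Hull--White decomposition $V_0=\E[BS(0,X_0,k,v_0)]+\tfrac12\E\int_0^T H(s,X_s,k,v_s)\Lambda_s\,ds$ with $H=(\partial_{xx}-\partial_x)\partial_x BS$, implicit differentiation of $V_0=BS(0,X_0,k,I(0,k))$ in $k$, a second application of the decomposition to freeze $H$ at time~$0$ (which produces exactly the two remainders (\ref{j3}) and (\ref{j4})), and the explicit at-the-money evaluation $\partial_k H/\partial_\sigma BS\sim v_0^{-3}T^{-2}$. The only point where your write-up could be sharpened is the handling of the ``level'' contribution $\E[\partial_k BS(0,X_0,k^*,v_0)]-\partial_k BS(0,X_0,k^*,I)$: in the original argument this is not discarded by a Taylor expansion alone but requires a further use of the decomposition and hypotheses (\ref{j1})--(\ref{j2}) to show it is $o(T^{\min(1/2,H)})$, since both terms individually blow up like $T^{-1/2}$ at the money.
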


\begin{remark}
Observe that there are two typo in Theorem 8 of Al\`os et al. \cite{Alos2018}. First a factor $T^{-\gamma}$ missing in their hypothesis {\bf (H5)}. Here we are taking $\gamma=\min(H-\frac12, 0) \in (-\frac12, 0]$. Moreover  there is a square missing in the  $u_s(T-s)$ and it should be  $u_s^2(T-s)$, see for example Lemma 6.3.1 in \cite{Alos2021a}. 
\end{remark}

We next present some  technical lemmas that will be  needed in order to check that the hypotheses of the preceding theorems are satisfied.
\begin{lemma} \label{add}
Assume Hypothesis 1. Then, for every $p \geq 1$, there exists a constant $c_{p}>0$ such that
for all $0\leq t <T \leq 1$,
$$
\left(\mathbb{E}\left[v_t^{-2p}\right]\right)^{1/p}
 \leq  c_p \frac{T^{2}}{(T-t)^{2}}.
$$
\end{lemma}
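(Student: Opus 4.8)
The plan is to work pathwise from the definition of $v_t$ and to reduce the claim to a uniform bound on the moments of the ratio $M_s/S_s$. Recalling from (\ref{dACE}) and (\ref{vt}) that $\phi_s=\sigma_s S_s(T-s)/(T M_s)$ and $v_t^2=\frac{1}{T-t}\int_t^T\phi_s^2\,ds$, I would first write
$$v_t^2=\frac{1}{(T-t)T^2}\int_t^T\frac{\sigma_s^2S_s^2(T-s)^2}{M_s^2}\,ds,$$
and then discard the part of the integral near maturity: restricting the integration to $[t,(t+T)/2]$, on which $(T-s)^2\geq (T-t)^2/4$, and using the lower bound $\sigma_s\geq c_1$ from Hypothesis \ref{Hyp1}, gives the pathwise estimate
$$v_t^2\geq \frac{c_1^2(T-t)}{4T^2}\int_t^{(t+T)/2}\Big(\frac{S_s}{M_s}\Big)^2\,ds.$$
Taking reciprocals and $L^p$ norms, it then suffices to show that $\E\big[(\int_t^{(t+T)/2}(S_s/M_s)^2\,ds)^{-p}\big]^{1/p}$ is of order $(T-t)^{-1}$, uniformly in $t$ and in $T\leq 1$, since this combines with the prefactor $T^2/(T-t)$ to produce the claimed order $T^2/(T-t)^2$.

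I expect the main obstacle to be that $M_s$ sits in the denominator and, being an average, admits no useful pathwise lower bound; the key idea is to convert the reciprocal of the integral into an integral of reciprocals by Jensen's inequality. Writing $\ell=(T-t)/2$ for the length of the interval and applying Jensen to the convex map $x\mapsto x^{-p}$ against the normalized measure $\ell^{-1}\,ds$, I would obtain
$$\Big(\int_t^{(t+T)/2}\big(S_s/M_s\big)^2\,ds\Big)^{-p}\leq \ell^{-p-1}\int_t^{(t+T)/2}\Big(\frac{M_s}{S_s}\Big)^{2p}\,ds.$$
Taking expectations, this reduces everything to proving that $\E[(M_s/S_s)^{2p}]$ is bounded by a constant depending only on $p$ and $c_2$, uniformly over $0\le s\le T\le 1$: integrating such a uniform bound over an interval of length $\ell$ yields exactly $\E[(\int_t^{(t+T)/2}(S_s/M_s)^2\,ds)^{-p}]\le c\,\ell^{-p}$, hence the required $(T-t)^{-1}$ order after taking the $p$-th root.

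It remains to establish the uniform $L^{2p}$ bound on $M_s/S_s$. Using the expression for $M_s$ in Remark \ref{r11},
$$\frac{M_s}{S_s}=\frac{1}{T}\int_0^s\frac{S_u}{S_s}\,du+\frac{T-s}{T}\leq \frac{1}{T}\int_0^s\frac{S_u}{S_s}\,du+1,$$
so by Minkowski's inequality it is enough to bound $\|S_u/S_s\|_{2p}$ uniformly for $0\le u\le s\le T\le 1$. Since $S_u/S_s=\exp(-\int_u^s\sigma_r\,dW_r+\tfrac12\int_u^s\sigma_r^2\,dr)$ and $c_1\le\sigma_r\le c_2$, the drift term is deterministically bounded, while the exponential moments of the martingale part are controlled through the standard exponential-martingale estimate: since the quadratic-variation increment $\int_u^s\sigma_r^2\,dr\le c_2^2(s-u)\le c_2^2$ is bounded a.s., one gets $\E[\exp(-q\int_u^s\sigma_r\,dW_r)]\le \exp(\tfrac12 q^2c_2^2)$ for every $q$. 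Multiplying the two contributions gives the desired uniform bound and closes the argument. Overall I regard the Jensen reduction as the crux of the proof, the exponential-moment estimate being routine once the boundedness of $\sigma$ is invoked.
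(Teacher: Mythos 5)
Your proof is correct, but it follows a genuinely different route from the paper's. The paper keeps the whole interval $[t,T]$ and bounds the integrand pathwise: writing $T M_r=\int_0^r S_u\,du+S_r(T-r)\leq T\sup_{u\leq T}S_u$, it gets $\phi_r^2\geq c_1^2\,T^{-2}(T-r)^2\exp\bigl(-4\sup_{t\in[0,T]}\bigl|\int_0^t\sigma_s\,dW_s-\tfrac12\int_0^t\sigma_s^2\,ds\bigr|\bigr)$, hence $\int_t^T\phi_r^2\,dr\gtrsim e^{-4\sup|\cdot|}\,(T-t)^3/(3T^2)$; the price is that one must then control the exponential moment $\E\exp\bigl(4p\sup_{[0,T]}\bigl|\int_0^\cdot\sigma_s\,dW_s\bigr|\bigr)$, which the paper does by expanding the exponential in series and applying Burkholder--Davis--Gundy to each term. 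You instead sacrifice half the interval (restricting to $[t,(t+T)/2]$, where $(T-s)\asymp(T-t)$, which costs only a constant), and then use Jensen's inequality to convert the negative moment of the time integral into a time integral of negative moments. This is the key structural difference: after Jensen you only need the \emph{fixed-time} bound $\sup_{s\leq T\leq 1}\E[(M_s/S_s)^{2p}]<\infty$, which follows from Minkowski's integral inequality and the elementary exponential-martingale estimate $\E[\exp(-q\int_u^s\sigma_r\,dW_r)]\leq\exp(\tfrac12 q^2c_2^2)$, so no supremum-of-martingale estimate or BDG/series argument is ever needed. Both arguments deliver the same rate $T^2/(T-t)^2$; the paper's is shorter once the pathwise bound is accepted, while yours is more elementary in its probabilistic ingredients and isolates cleanly where each power of $(T-t)$ comes from.
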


\begin{proof}
We follow a similar idea used in Lemma 3 of \cite{Alos2019}.
We observe that by the definition of $\phi_t$ and equation (\ref{er}), we get that
$$
\int_t^T \phi_r^2 dr=\int_t^T  \left(\frac{\sigma_r S_r (T-r)}{\int_0^r S_u du +S_r (T-r)}\right)^2.
$$
Then, using Hypothesis \ref{Hyp1} we get that
\begin{equation*} \begin{split}
\int_t^T \phi_r^2 dr \geq c_1^2 \exp \left(-4 \sup_{t \in [0,T]} \bigg\vert \int_0^t \sigma_s dW_s-\frac12 \int_0^t \sigma^2_s ds\bigg\vert\right) \frac{(T-t)^3}{3T^2}.
\end{split}
\end{equation*}
Thus, using again Hypothesis 1,
\begin{equation} \label{poc}\begin{split}
\left(\int_t^T \phi_r^2 dr \right)^{-p} \leq c_1^{-2p} e^{2pT c_2^2}\exp \left(4p \sup_{t \in [0,T]} \bigg\vert \int_0^t \sigma_s dW_s\bigg\vert \right) \frac{3^p T^{2p}}{(T-t)^{3p}}.
\end{split}
\end{equation}
By Burkholder-Davis-Gundy inequality and Hypothesis 1, for  any integer $n \geq 1$,
$$
\mathbb{E} \left(\sup_{t \in [0,T]} \bigg\vert \int_0^t \sigma_s dW_s\bigg\vert^n \right) \leq  C n^{n/2} (c T)^{n/2},
$$ 
for some positive constants $c,C$.
Therefore, for $T \leq 1$,
\begin{equation} \label{stir}
\mathbb{E}\exp \left(4p \sup_{t \in [0,T]} \bigg\vert \int_0^t \sigma_s dW_s\bigg\vert \right)
\leq C \sum_{n=1}^{\infty} \frac{(c n)^{n/2}}{n!},
\end{equation}
which is a convergent series. This completes the proof.
\end{proof}

\begin{lemma} \label{l3}
Assume Hypothesis 1. Then, for any $p \geq 1$ there exists a constant $c_p>0$ such  that for all $0\leq t \leq T \leq 1$,
$$
 \E ( M_t^{-p}) \leq c_p.
$$
\end{lemma}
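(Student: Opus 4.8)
The plan is to obtain a pathwise lower bound for $M_t$ which, after taking reciprocals, reduces everything to an exponential moment of the running supremum of the stochastic integral $N_u:=\int_0^u \sigma_s\,dW_s$, a quantity already controlled in the proof of Lemma~\ref{add}. First I would recall from (\ref{er}) that
$$TM_t = \int_0^t S_u\,du + S_t(T-t),$$
and write $S_u = S_0 \exp\left(N_u - \tfrac12\int_0^u \sigma_s^2\,ds\right)$. Since both summands are positive, $M_t$ dominates each of them; the natural strategy is to split according to whether the spot term $S_t(T-t)$ or the integral term $\int_0^t S_u\,du$ carries the mass, so I would distinguish the cases $t \leq T/2$ and $t > T/2$.

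In the case $t \leq T/2$ we have $T-t \geq T/2$, hence $M_t \geq \tfrac{T-t}{T}S_t \geq \tfrac12 S_t$ and so $M_t^{-p} \leq 2^p S_t^{-p}$. Writing $S_t^{-p} = S_0^{-p}\exp\left(-pN_t + \tfrac{p}{2}\int_0^t \sigma_s^2\,ds\right)$ and using $\sigma_s \leq c_2$ together with $t \leq 1$, the drift contributes a bounded factor while $-pN_t \leq p\sup_{u\in[0,T]}|N_u|$. In the case $t > T/2$ I would instead keep the integral term, $M_t \geq \tfrac1T\int_0^{T/2}S_u\,du$, and apply Jensen's inequality to the convex function $\exp$ with respect to the uniform probability measure on $[0,T/2]$:
$$\frac{2}{T}\int_0^{T/2} S_u\,du \geq \exp\left(\frac{2}{T}\int_0^{T/2}\log S_u\,du\right).$$
This gives $M_t^{-p} \leq 2^p \exp\left(-\tfrac{2p}{T}\int_0^{T/2}\log S_u\,du\right)$; expanding $\log S_u = \log S_0 + N_u - \tfrac12\int_0^u\sigma_s^2\,ds$, the drift is again bounded by $\sigma_s\leq c_2$ and $u\leq 1$, while the martingale part is controlled by $\left|\tfrac{2}{T}\int_0^{T/2}N_u\,du\right| \leq \sup_{u\in[0,T]}|N_u|$.

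In both cases we have thus reduced the claim to an estimate of the form $M_t^{-p} \leq C_p \exp\left(\kappa_p \sup_{u\in[0,T]}|N_u|\right)$, with constants $C_p,\kappa_p$ depending only on $p$, $c_2$ and $S_0$ and not on $t$ or $T$. Taking expectations, the result then follows from the exponential moment estimate established inside Lemma~\ref{add}: by the Burkholder--Davis--Gundy inequality and Hypothesis~\ref{Hyp1}, $\E\left(\sup_{u\in[0,T]}|N_u|^n\right) \leq C n^{n/2}(cT)^{n/2}$, so that $\E\exp\left(\kappa_p\sup_{u\in[0,T]}|N_u|\right) \leq C\sum_{n\geq 0} (c'n)^{n/2}/n!$ is a convergent series for $T\leq 1$, exactly as in (\ref{stir}). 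The main (and only genuinely delicate) point is this control of the exponential moment of $\sup_{u\in[0,T]}|N_u|$; once the two elementary lower bounds for $M_t$ are in place, everything funnels into the convergent-series argument already used for Lemma~\ref{add}, and taking the maximum of the constants arising in the two cases yields the uniform bound $c_p$.
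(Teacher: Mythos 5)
Your proof is correct and is essentially the paper's argument: both reduce the claim to the exponential moment $\E\exp\left(\kappa_p\sup_{u\in[0,T]}\left|\int_0^u\sigma_s\,dW_s\right|\right)$, which is finite uniformly in $T\leq 1$ by the Burkholder--Davis--Gundy/series estimate (\ref{stir}) established inside the proof of Lemma \ref{add}. The only difference is in the elementary pathwise step: since $TM_t=\int_0^t S_u\,du+S_t(T-t)$ is an average of the path values of $S$ with total weight $T$, the paper bounds $M_t\geq\inf_{u\in[0,T]}S_u\geq S_0\,e^{-c_2^2T/2}\exp\left(-\sup_{u\in[0,T]}\left|\int_0^u\sigma_s\,dW_s\right|\right)$ in one line, so your case split $t\leq T/2$ versus $t>T/2$ and the Jensen inequality step are a valid but unnecessary detour.
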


\begin{proof}
Using (\ref{er}) and a similar argument as in the proof of Lemma  \ref{add}, we get  that
$$
\E(M_t^{-p} )\leq e^{p T c_2^2} \E \exp 
\left( p \sup_{t \in [0,T]} \bigg\vert \int_0^t \sigma_s dW_s\bigg\vert \right),
$$
and the result follows from (\ref{stir}).
\end{proof}

Next, we obtain approximation formulas for $\phi$ and its Malliavin derivative.
\begin{lemma} \label{key approx}
Under Hypotheses 1,2, \textnormal{(\ref{d1})}, and 4 the following holds for all $0 \leq s \leq r \leq T$,
\begin{align}
\phi_r&= \frac{\sigma_0(T-r)}{T}+ X^1_{r}, \label{ss1}\\
\phi^2_r&=\frac{\sigma^2_0(T-r)^2}{T^2}+ X^2_{r}, \label{ss2}\\
D_s^W\phi_r&= \frac{\rho(T-r) D_s^{W'}\sigma_r}{T}+\frac{(T-r)\sigma_0^2}{T}-\frac{(T-r)(T-s)\sigma_0^2}{T^2}+X^3_{T,r,s},\label{dphi}\\
D_s^W\phi_r^2&= \frac{2\sigma_0\rho(T-r)^2 D_s^{W'}\sigma_r}{T^2}+\frac{2(T-r)^2 \sigma_0^3}{T^2}
-\frac{2(T-r)^2 (T-s)\sigma_0^3}{T^3}+X^4_{r,s}, \label{dphi2}
\end{align}
where $X^i$ are random variables satisfying for all $0 \leq s \leq r \leq T\leq 1$,
\begin{align}
\vert  X^1_{r}\vert &\leq Y^1_{r} \frac{(T-r) r^{\gamma}}{T},\nonumber \\
\vert  X^2_{r}\vert &\leq Y^2_r \frac{(T-r)^2 r^{\gamma}}{T^2}, \nonumber\\
\vert  X^3_{r,s}\vert &\leq Y^3_{r,s} \frac{(T-r) (r-s)^{H}}{T},\nonumber\\
\vert  X^4_{r,s}\vert &\leq Y^4_{r,s} \frac{(T-r)^2 r^{\gamma} (r-s)^H}{T^2},\nonumber
\end{align}
and $Y^i$ are positive random variables satisfying for all $p \geq 1$
$$
\E( \sup_{0 \leq s \leq r \leq T\leq 1} \vert  Y^i_{r,s}\vert^p )\leq c_i
$$
for some positive constants $c_i$ only dependent on $p$ and $\gamma>0$ is from Hypothesis 4.
\end{lemma}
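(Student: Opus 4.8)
The plan is to prove the four expansions by extracting in each case a deterministic ``frozen coefficient'' leading term---obtained by replacing $\sigma_u$ by $\sigma_0$, $S_u$ by $S_0$ and $M_u$ by $S_0$, and by discarding the martingale-type Malliavin integrals---and then controlling the remainder in $L^p$ through three elementary scales. First I would record the auxiliary bounds to be used repeatedly. Since $S_r-S_u=\int_u^r\sigma_vS_v\,dW_v$, the Burkholder--Davis--Gundy inequality together with Hypothesis 1 and the uniform moment bounds of Remark \ref{r1} give $\|S_r-S_u\|_p\le c\,(r-u)^{1/2}$; Hypothesis 4 (with one endpoint at $0$) gives $\|\sigma_r-\sigma_0\|_p\le c\,r^{\gamma}$; and Lemma \ref{l3} gives $\|M_r^{-1}\|_p\le c$ uniformly in $0\le r\le T\le1$. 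The only place where the exponent $H$ enters is through (\ref{d1}): by Burkholder--Davis--Gundy and Minkowski's integral inequality, $\big\|\int_s^rD^{W'}_s\sigma_u\,dW_u\big\|_p\le c\big(\int_s^r(u-s)^{2H-1}\,du\big)^{1/2}=c'(r-s)^{H}$, while the companion Lebesgue integral $\int_s^r\sigma_uD^{W'}_s\sigma_u\,du$ is of the smaller order $(r-s)^{H+1/2}$. I may assume $\gamma\le\tfrac12$ (if $\gamma>\tfrac12$, Hypothesis 4 also holds with exponent $\tfrac12$ since $r-s\le1$), so that the diffusive scale $r^{1/2}$ is dominated by $r^{\gamma}$.

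For (\ref{ss1}) I would use $TM_r=\int_0^rS_u\,du+S_r(T-r)$ to write the exact identity
\[
\phi_r-\frac{\sigma_0(T-r)}{T}
=\frac{T-r}{T^2M_r}\left(TS_r(\sigma_r-\sigma_0)+\sigma_0\int_0^r(S_r-S_u)\,du\right),
\]
and then bound the two summands by Hölder's inequality: the first equals $\tfrac{T-r}{T}\cdot\tfrac{S_r}{M_r}(\sigma_r-\sigma_0)$, controlled using $\|S_r/M_r\|_{2p}\le c$ and $\|\sigma_r-\sigma_0\|_{2p}\le c\,r^{\gamma}$; the second uses $\|M_r^{-1}\|_{2p}\le c$ and $\big\|\int_0^r(S_r-S_u)\,du\big\|_{2p}\le c\int_0^r(r-u)^{1/2}\,du\le c\,r^{3/2}$, together with $r\le T$ to turn $r^{3/2}/T$ into $r^{1/2}\le r^{\gamma}$. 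This yields the announced $O((T-r)r^{\gamma}/T)$ bound. Estimate (\ref{ss2}) follows at once by squaring, since $X^2_{T,r}=2\tfrac{\sigma_0(T-r)}{T}X^1_{T,r}+(X^1_{T,r})^2$ and $r^{2\gamma}\le r^{\gamma}$.

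The substantial work is (\ref{dphi}). I would differentiate $\phi_r=(T-r)\sigma_rS_r/(TM_r)$ by the product and quotient rules, inserting $D^W_s\sigma_r=\rho D^{W'}_s\sigma_r$, the identity $D^W_sS_r=S_r\big(\sigma_s+\rho\int_s^rD^{W'}_s\sigma_u\,dW_u-\rho\int_s^r\sigma_uD^{W'}_s\sigma_u\,du\big)$ for $s\le r$ (proved in the Appendix), and $D^W_sM_r=\tfrac1T\big(\int_s^rD^W_sS_u\,du+(T-r)D^W_sS_r\big)$. Freezing every coefficient at its value at $0$ and dropping the two Malliavin integrals, the leading part simplifies, through the algebraic identity $1-\tfrac{T-s}{T}=\tfrac{s}{T}$, to exactly $\tfrac{\rho(T-r)D^{W'}_s\sigma_r}{T}+\tfrac{(T-r)\sigma_0^2}{T}-\tfrac{(T-r)(T-s)\sigma_0^2}{T^2}$. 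The remainder $X^3_{T,r,s}$ is then a finite sum of products, each containing at least one ``small'' factor---a fluctuation $\sigma_\cdot-\sigma_0$ or $S_\cdot-S_0$, the difference $M_r-S_0$, or one of the two Malliavin integrals---which I would estimate term by term by Hölder, using the scales above and the moment bound $\|D^{W'}_s\sigma_u\|_p\le c(u-s)^{H-1/2}$ of (\ref{d1}). Finally (\ref{dphi2}) follows from $D^W_s\phi_r^2=2\phi_rD^W_s\phi_r$ by substituting (\ref{ss1}) and (\ref{dphi}) and collecting the leading products, the cross terms being controlled by the bounds just obtained.

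The main obstacle is the bookkeeping in this last step: one must organise the numerous remainder terms so that the fluctuation factors, the Malliavin integrals and the $1/M_r,\,1/M_r^2$ factors are separated by Hölder without destroying the gains $(r-s)^{H}$ and the powers of $(T-r)/T$, and one must verify that the genuinely singular factor $D^{W'}_s\sigma_u\sim(u-s)^{H-1/2}$ always appears either integrated in $u$ (producing $(r-s)^{H}$) or multiplied by a compensating fluctuation, so that the announced orders are reached. A secondary but essential point is keeping every constant uniform over $0\le s\le r\le T\le1$, which is exactly what Lemma \ref{l3} and the exponential moment estimate (\ref{stir}) provide.
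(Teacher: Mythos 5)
Your overall route coincides with the paper's: isolate the frozen--coefficient leading terms and control the remainders in $L^p$ through the scales $\left(\E|\sigma_r-\sigma_0|^p\right)^{1/p}\le c\,r^{\gamma}$, $\left(\E|S_r-S_u|^p\right)^{1/p}\le c\,(r-u)^{1/2}$, Lemma \ref{l3}, and $\bigl(\E|\int_s^r D^{W'}_s\sigma_u\,dW_u|^p\bigr)^{1/p}\le c\,(r-s)^{H}$. For (\ref{ss1}) your exact algebraic identity (a consequence of $S_r-M_r=\frac1T\int_0^r(S_r-S_u)\,du$) is a cleaner substitute for the paper's It\^o expansion of $\sigma_0 S_{\cdot}(T-r)/(TM_{\cdot})$, and deducing (\ref{ss2}) by squaring is simpler than the paper's second It\^o expansion. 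Your explicit reduction to $\gamma\le\frac12$ is legitimate and is in fact silently needed by the paper as well, since both arguments produce pieces of order $r^{3/2}/T\le r^{1/2}$. Your frozen-coefficient identification of the leading part of $D^W_s\phi_r$ is also exactly the right one (it reproduces the paper's three displayed terms, which the paper reads off from (\ref{a2})).

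The gap is your final claim that the bookkeeping ``reaches the announced orders'' for $X^3$ and $X^4$: no term-by-term H\"older argument can do this, because individual remainder terms violate those bounds, which vanish as $s\uparrow r$ while the remainder does not. Concretely, $X^3_{T,r,s}$ must absorb the coefficient-freezing error
\begin{equation*}
\frac{T-r}{T}\left(\frac{\sigma_r\sigma_s S_r}{M_r}-\sigma_0^2\right)
=\frac{T-r}{T}\left((\sigma_r\sigma_s-\sigma_0^2)\frac{S_r}{M_r}+\sigma_0^2\Bigl(\frac{S_r}{M_r}-1\Bigr)\right),
\end{equation*}
whose $L^p$ norm is of order $r^{\gamma}+r^{3/2}/T$ \emph{uniformly in} $s\le r$ and converges, as $s\uparrow r$, to that of a generically nonzero random variable, whereas the announced bound $c_3(T-r)(r-s)^{H}/T$ tends to $0$. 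Likewise, substituting (\ref{ss1}) and (\ref{dphi}) into $D^W_s\phi_r^2=2\phi_r D^W_s\phi_r$ produces the cross term $2X^1_{T,r}\,\rho(T-r)D^{W'}_s\sigma_r/T$, of order $(T-r)^2 r^{\gamma}(r-s)^{H-\frac12}/T^2$, which exceeds the announced $X^4$ bound by the factor $(r-s)^{-1/2}$; and when $H>\frac12$ the increments $S_r-S_s$ and the $D^B M_r$ stochastic integrals contribute order $(r-s)^{1/2}$, again larger than $(r-s)^H$. What your estimates actually deliver are additive bounds of the type
\begin{equation*}
\left(\E|X^3_{T,r,s}|^p\right)^{1/p}\le c\,\frac{T-r}{T}\Bigl((r-s)^{\min(H,\frac12)}+r^{\min(\gamma,\frac12)}\bigl(1+(r-s)^{H-\frac12}\bigr)\Bigr),
\end{equation*}
with the analogous bound carrying an extra factor $(T-r)/T$ for $X^4$. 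You should state and prove these instead; they are what the order computations in Section 4 actually require (after integration in $r$ and $s$ the extra terms are of lower order). To be fair, this defect is inherited from the paper itself: its proof also replaces $\sigma_r\sigma_s$ by $\sigma_0^2$ and $S_r/M_r$, $S_rS_s/M_r^2$ by $1$, so it generates exactly the same $r^{\gamma}$- and $r^{1/2}$-type errors and does not establish the literal bounds of Lemma \ref{key approx} either; but since your write-up asserts that the announced bounds are attained, the assertion as written is a genuine gap.
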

\(Proof.\) 
We start proving the decomposition for \(\phi_r\).  We  consider  the function
$$
F(S_s, M_s) := \frac{\sigma_0 S_s(T-r)}{T M_s},\quad 0\leq s \leq r.
$$
Observe that
$$
\phi_r=F(S_r, M_r)+(\sigma_r-\sigma_0)\frac{S_r(T-r)}{T M_r}.
$$
Then, using  It\^o's lemma, we get  that
we get
\begin{align*}
F(S_r, M_r)=  \frac{\sigma_0(T-r)}{T}&+\frac{(T-r)}{T} \bigg\{ \int_0^r \frac{\sigma_0}{M_s} dS_s\\
&-\int_0^r \frac{ \sigma_0 S_s}{M^2_s} dM_s+\int_0^r \frac{\sigma_0 \sigma_s^2 S_s^3}{M_s^3} \frac{(T-s)^2}{T^2}ds\bigg\}.
\end{align*}
Then, using Hypotheses \ref{Hyp1} and \ref{Hyp4} and Lemma \ref{l3}, we conclude (\ref{ss1}).
Similarly, we can write 
$$
\phi_r^2=F^2(S_r, M_r)+(\sigma_r^2-\sigma_0^2)\frac{S_r^2(T-r)^2}{T^2 M_r^2}.
$$
and applying It\^o's formula to the function $F^2(S_s, M_s)$  to obtain (\ref{ss2}).

We next prove (\ref{dphi}). 
Using expression (\ref{a2}) of  the Malliavin derivatives computed in the Appendix, we see that the leading terms are equal to
\begin{equation*} \begin{split}
&\frac{\rho (T-r) S_r D_s^{W'}\sigma_r}{T M_r}+ \frac{\rho^2 (T-r) \sigma_r S_r \sigma_s}{T M_r}- \frac{\rho^2(T-r)S_r\sigma_r \sigma_s S_s (T-s)}{T^2 M_r^2} \\
&\quad + \frac{(1-\rho^2) (T-r) \sigma_r S_r \sigma_s}{T M_r} - \frac{(1-\rho^2)(T-r)S_r\sigma_r \sigma_s S_s (T-s)}{T^2 M_r^2}\\=
&\frac{\rho (T-r) S_r D_s^{W'}\sigma_r}{T M_r}+ \frac{(T-r) \sigma_r S_r \sigma_s}{T M_r}- \frac{(T-r)S_r\sigma_r \sigma_s S_s (T-s)}{T^2 M_r^2}.
\end{split}
\end{equation*}
Then, applying It\^o's formula to the functions $F(S_s, M_s)=\frac{S_s}{M_s}$ and $F(S_s, M_s)=\frac{S_s^2 }{M_s^2}$ as above, we obtain (\ref{dphi}).

Finally, in order to check (\ref{dphi2}) it suffice to use the formula
$
D_s^W\phi_r^2=2\phi_r D_s^W\phi_r
$
together with (\ref{ss1}) and (\ref{dphi}). This concludes the proof.
\(\hfill\fbox{\phantom{\rule{.1ex}{.1ex}}}\)

\section{Proof of Theorem \ref{limskew}}\label{limiting-behaviour-of-the-level-and-the-skew-of-the-atm-implied-volatility-of-an-asian-call-option}

\subsection{Proof of (\ref{main1}) in Theorem \ref{limskew}: ATM implied volatility
level}\label{limiting-behaviour-of-the-at-the-money-implied-volatility-level}

  By (\ref{limit}), it suffices to check that the Hypotheses of Theorem \ref{theorem8A} hold true. It is easy to check that Hypotheses 1 and 2 imply the first two hypotheses of Theorem \ref{theorem8A} (see Remarks \ref{r1} and \ref{r2}).

We next check (\ref{j1}). 
Using equation \eqref{poc} and Cauchy-Schwarz inequality we get that
\begin{equation*}
\begin{split}
 \mathbb{E}\int_0^T\frac{\Lambda_s}{v_s^2(T-s)}ds &\leq \int_0^T \frac{T^2}{(T-s)^3} \mathbb{E}(X_T \vert \Lambda_s \vert)ds \\
 &\leq \int_0^T \frac{T^2}{(T-s)^3} \left(\mathbb{E}(X_T^2)\right)^{1/2} \left(\mathbb{E}(\Lambda_s^2)\right)^{1/2}ds,
\end{split}
\end{equation*}
where $X_T = 3 c_1^{-2} e^{2T c_2^2}\exp \left(4 \sup_{t \in [0,T]} \bigg\vert \int_0^t \sigma_s dW_s\bigg\vert \right)$.

Next, due to equation \eqref{stir} we conclude that $\left(\mathbb{E}(X_T^2)\right)^{1/2}$ is bounded as $T\leq 1$. Since $\phi_t \leq \sigma_t$, Cauchy-Schwarz inequality, Lemma \ref{key approx} and Hypothesis (\ref{d1}) imply that
\begin{equation*}
\begin{split}
\mathbb{E}(\Lambda_s^2) &\leq C (T-s)\int_s^T \mathbb{E}(\vert D^W_s\phi_r^2\vert^2) dr \\
 &=O \left((T-s)\int_s^T 
\frac{(T-r)^4}{T^4} (r-s)^{2H-1} dr\right)\\
&= O \left(\frac{(T-s)^{5+2H}}{T^4}\right).
\end{split}
\end{equation*}
Finally, we conclude  that
\begin{equation*}
\begin{split}
 \mathbb{E}\int_0^T\frac{\Lambda_s}{v_s^2(T-s)}ds = O \left( \int_0^T (T-s)^{H-\frac{1}{2}} ds \right) = O \left( T^{H+\frac12}\right),
\end{split}
\end{equation*}
which proves (\ref{j1}).

Similarly, in order to check (\ref{j2}), 
we use Lemma \ref{key approx} together with Cauchy-Schwarz inequality, to get that
\begin{equation*} 
\begin{split}
&\frac{1}{T^2}\mathbb{E}\left( \frac{1}{v_0} \int_0^T \left(\int_s^T D^W_s\phi_r^2 dr\right)^2 ds\right)  \leq \frac{C}{T^2} \int_0^T (T-s) \int_s^T \mathbb{E}\left(\left(D^W_s\phi_r^2\right)^2\right) drds\\
& \qquad =O\left( \frac{1}{T^2} \int_0^T (T-s) \int_s^T \frac{(T-r)^4}{T^4} (r-s)^{2H-1} dr ds\right)\\
&\qquad =O\left( \int_0^T \frac{(T-s)^{5+2H}}{T^6} ds \right)\\
&\qquad=O(T^{2H}).
\end{split}
\end{equation*}

Thus, condition \eqref{j2} also holds and the proof is completed.

\(\hfill\fbox{\phantom{\rule{.1ex}{.1ex}}}\)

\subsection{Proof of (\ref{main2}) in Theorem \ref{limskew}: ATM implied volatility
skew}\label{limiting-behaviour-of-the-at-the-money-implied-volatility-skew}

We will apply Theorem \ref{theorem8B}.
We start checking hypothesis (\ref{j3}). 

Using (\ref{poc}), Lemma \ref{key approx} and Hypothesis (\ref{d1}), we get that
\begin{equation*} 
\begin{split}
&\E\left( \int_0^T \left(\int_s^T \phi_r^2 dr\right)^{-3} \Lambda_s\left( \int_s^T \Lambda_r dr \right)  ds \right)  \\
&=O\bigg(\E\bigg(\int_0^T \frac{X_T T^6}{(T-s)^9}\frac{T-s}{T}\int_s^t \frac{(T-r)^2}{T^2} \vert D_s^{W'} \sigma_r \vert dr \\
& \qquad \qquad \qquad \times 
\int_s^t \frac{T-r}{T} \int_r^T \frac{(T-u)^2}{T^2} \vert D_r^{W'} \sigma_u \vert du  dr\bigg) \bigg)\\
&=O\left(\int_0^T \frac{1}{(T-s)^3}\int_s^t (r-s)^{H-\frac12} dr
\int_s^t \int_r^T  (u-r)^{H-\frac12} du  dr\right)\\
&=O\left(\int_0^T (T-s)^{2H-1}\right)=O(T^{2H}).
\end{split}
\end{equation*}
Thus, as $\lim_{T \rightarrow 0}\frac{T^{\max(\frac12-H, 0)}}{\sqrt{T}} T^{2H}=\lim_{T \rightarrow 0} T^{\max(H, 2H-\frac12)}=0$ for all $H \in (0,1)$,  we get that (\ref{j3}) holds true for the leading terms of Lemma \ref{key approx}. The other terms can be treated similarly and we conclude that (\ref{j3}) holds true.

We next check (\ref{j4}). 
By the definition of $\Lambda_s$, we have \begin{align*}
\begin{split}
\int_s^TD_s^W\Lambda_rdr &= \int_s^TD_s^W\left(\phi_r\int_r^TD_r^W\phi_u^2du\right)dr\\
&= \int_s^T\left(\left(D_s^W\phi_r\right)\int_r^TD_r^W\phi_u^2du+\phi_r\int_r^TD_s^WD_r^W\phi_u^2du\right)dr,
\end{split}
\end{align*}
where
\begin{align*}
\begin{split}
D_s^WD_r^W\phi_u^2&=2(D_s^W\phi_uD_r^W\phi_u+\phi_uD_s^WD_r^W\phi_u).
\end{split}
\end{align*}
Next, using Lemma \ref{key approx} and Hypothesis \ref{Hyp3}, we get that the leading terms are in expectation of order
\begin{align*}
\begin{split}
& \int_s^T D_s^W\phi_r\int_r^TD_r^W\phi_u^2dudr \\
 & = O\left(\int_s^T \frac{(T-r)(r-s)^{H-\frac12}}{T}\int_r^T \frac{(T-u)^2 (u-r)^{H-\frac12}}{T^2}dudr \right)\\
&=O\left(\frac{(T-s)^{4+2H}}{T^3} \right),
\end{split}
\end{align*}
and
\begin{align*}
\begin{split}
& \int_s^T D_s^W\phi_r\int_r^TD_r^W\phi_u^2dudr \\
  & = O\left(\int_s^T \frac{(T-r)}{T}\int_r^T \frac{(T-u)^2 (u-r)^{H-\frac12}(u-s)^{H-\frac12}}{T^2}dudr \right)\\
&=O\left(\frac{(T-s)^{4+2H}}{T^3} \right).
\end{split}
\end{align*}
Then, we conclude that the leading terms satisfy that
\begin{equation*} 
\begin{split}
&\mathbb{E} \left(\int_0^T\left( \int_s^T \phi_r^2 dr\right)^{-2} \phi_s\left( \int_s^T D^W_s\Lambda_r dr \right)  ds \right) \\
&= O\left(\int_0^T \frac{(T^3}{(T-s)^5}\frac{(T-s)^{4+2H}}{T^3} ds\right)\\
&=O\left(\int_0^T (T-s)^{2H-1}\right)=O(T^{2H}).
\end{split}
\end{equation*}
Following as above this proves (\ref{j4}).

We are left to check hypothesis (\ref{j5}). Similarly as above, we have that
$$
\frac{T^{\max(\frac12-H, 0)}}{T^2}\mathbb{E}\left(\frac{1}{v_0^3} \int_0^T \Lambda_s ds \right)=O(1),
$$
and thus the limit is finite. Therefore, all the hypotheses of Theorem \ref{theorem8B} are satisfied.

We finally compute the limit of (\ref{j5}) to check that it coincides with (\ref{main2}). Using Lemma \ref{key approx}, we obtain 
\begin{equation*} \begin{split}
&\lim_{T\to 0} \frac{1}{T^2}\mathbb{E}\left(\frac{T^{\max(\frac12-H, 0)}}{v_0^3} \int_0^T \phi_s \left(\int_s^T D^W_s\phi_r^2dr\right)^2 ds \right)\\
&=\lim_{T \to 0}\E \bigg(\frac{T^{\max(\frac12-H, 0)}}{T^2 v_0^3} \int_0^T  \frac{\sigma_0 (T-s)}{T}
\int_s^T \bigg(\frac{2\sigma_0\rho(T-r)^2 D_s^{W'}\sigma_r}{T^2}\\
&\qquad  +\frac{2(T-r)^2\sigma_0^3}{T^2} -\frac{2(T-r)^2 (T-s)\sigma_0^3}{T^3}\bigg) dr ds\bigg)\\
& =\lim_{T \to 0}\E \left(\frac{2\sigma_0^2 \rho T^{\max(\frac12-H, 0)}}{T^5 v_0^3} \int_0^T  \left( (T-s)
\int_s^T (T-r)^2 D_s^{W'}\sigma_r dr\right) ds\right)\\
&\qquad +\lim_{T \to 0}T^{\max(\frac12-H, 0)}\E \left(\frac{\sigma_0^4 }{45 v_0^3}\right).
\end{split}
\end{equation*}
Using (\ref{poc}) and dominated convergence we see that   
$$\lim_{T \to 0}T^{\max(\frac12-H, 0)}\E \left(\frac{\sigma_0^4 }{45 v_0^3}\right)=\lim_{T \to 0}T^{\max(\frac12-H, 0)} \frac{\sqrt{3} \sigma_0}{60},$$ since $v_0^2$ converges a.s. towards $\frac{\sigma_0^2}{3}$ as $T \rightarrow 0$. 
In order to compute the  remaining limit we write
\begin{equation*} \begin{split}
&\lim_{T \to 0}\E \left(\frac{2\sigma_0^2 \rho T^{\max(\frac12-H, 0)}}{T^5 v_0^3} \int_0^T  \left( (T-s)
\int_s^T (T-r)^2 D_s^{W'}\sigma_r dr\right) ds\right)\\
&\qquad =\lim_{T \to 0}\E \left(\left(\frac{1}{v_0^3}-
\frac{3\sqrt{3}}{\sigma_0^3} \right) A_T \right) +\lim_{T \to 0}\frac{3\sqrt{3}}{\sigma_0^3}\E \left(A_T\right),  
\end{split}
\end{equation*}
where
$$
A_T=\frac{2\rho \sigma_0^2 T^{\max(\frac12-H, 0)}}{T^5}\int_0^T  \left( (T-s)
\int_s^T (T-r)^2 D_s^{W'}\sigma_r dr\right) ds.
$$
By dominated convergence we see  that 
$$
\lim_{T \to 0}\E \left(\left(\frac{1}{v_0^3}-
\frac{3\sqrt{3}}{\sigma_0^3} \right) A_T \right) =0,
$$
which
concludes the proof of (\ref{main2}).
\(\hfill\fbox{\phantom{\rule{.1ex}{.1ex}}}\)

\section{Numerical analysis}\label{numerical-analysis}

In this section we present numerical evidence of the
adequacy of Theorems \ref{limskew} in different settings. 

\subsection{The Black-Scholes model under constant volatility}\label{the-black-scholes-model}

We consider the Black-Scholes model (\ref{bspm}) under constant volatility $\sigma>0$, that is, 
$$
  dS_t = \sigma S_tdW_t, \qquad S_t=S_0e^{\sigma W_t-\frac{\sigma^2}{2}t}.
$$
Appealing to Theorem \ref{limskew} with $\rho=0$ and $H=\frac12$, we conclude that the level and the skew of the
at-the-money implied volatility satisfy  that 
$$
\lim_{T\to 0} I(0,k^{*}) =\frac{\sigma}{\sqrt{3}} \quad \text{ and } \quad \lim_{T \to 0} \partial_k I(0,k^{*}) = \frac{\sigma \sqrt{3}}{30}.
$$
Notice that these results coincide with the ones obtained in Pirjol and
Zhu \cite{Pirjol2016}, see Section 5.4 below.

We next proceed with numerical simulations with  parameters 
$$
S_0=10,\qquad  T=\frac{1}{252},\qquad \sigma \in [0.1, 0.2, \dots, 1.4].
$$

We use the control variates method in order to get estimates of an Asian call option price. As a control variate we use a geometric Asian call option whose price is given by
\begin{equation} \label{geom}
BS_{GeomAsian} = e^{-\frac{1}{4}\sigma_G^2 T}S_0 N(d_1)-KN(d_2),
\end{equation}
where
$$
d_1 = \frac{\log\frac{S_0}{K} + \frac{1}{4} \sigma_G^2T }{\sigma_G \sqrt{T}},\quad  d_2 = d_1 - \sigma_G \sqrt{T}, \quad \sigma_G = \frac{\sigma}{\sqrt{3}}.
$$
Then, the Asian call option price estimator has the following form 
\begin{equation} \label{controlvariates}
\hat{BS}_{Asian} = \frac{1}{N}\sum_{i=1}^N V_T^i - c^*\frac{1}{N}\sum_{i=1}^N(\hat{BS}_{Asian}^i-BS_{GeomAsian}),
\end{equation}
where
$$
c^* =  \frac{\sum_{i=1}^N(V_T^i-\frac{1}{N}\sum_{i=1}^N A_T^i)(\hat{BS}_{Asian}^i-BS_{GeomAsian})}{\sum_{i=1}^N(\hat{BS}_{Asian}^i-BS_{GeomAsian})^2},
$$
and
$$
\hat{BS}_{Asian}^i = \max(\sqrt{S_0^i S_1^i \dots S_m^i} -K,0),
$$
where $N=2000000$, $m=50$, $V_T^i=\max(A_T^i-K,0)$ and the sub-index $i$ indicates the quantity estimated from a realisation of a path from Monte Carlo simulation.

In order to retrieve an estimation for the implied volatility $\hat{I}(0, k^{\ast})$ from the estimated Asian call price we use the algorithm presented in J\"ackel \cite{Jackel2015a}. 
For the estimation of the skew, we use the following finite difference approximation 
\begin{align}
\begin{split}
\partial_k \hat{I}(0,k^{*}) = \frac{\hat{I}(0,k^{*}\log(1+\Delta k)) - \hat{I}(0,\frac{k^{*}}{\log(1+\Delta  k)})}{2\log(1+ \Delta k)},
\end{split}
\label{skew_estimator}
\end{align}
where $\Delta k = 0.001$.

The at-the-money level and the skew of the implied volatility are
presented at Figure \ref{fig1}. We conclude that the results of the numerical simulation are in line with the presented theoretical formulas.

\begin{figure}[h]
\centering
\begin{tabular}{ccc}
  \includegraphics[width=60mm]{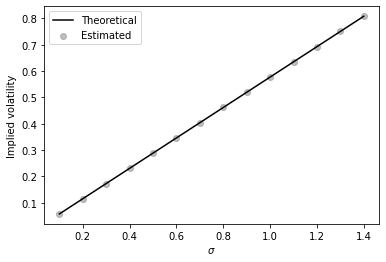} &   \includegraphics[width=60mm]{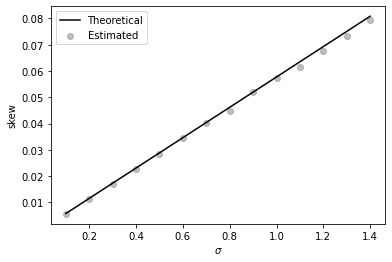} \\[2pt]
\end{tabular}
\caption{At-the-money level and skew of the IV under  Black-Scholes}
\label{fig1}
\end{figure}

\subsection{The SABR model}\label{the-sabr-model}

In this section we consider the SABR stochastic volatility model with skewness parameter 1, which is the most common case from a practical point of view. This corresponds to equation (\ref{bspm}), where $S_t$ denotes the  forward price of the underlying asset and
$$
    d\sigma_t  = \alpha\sigma_t dW_t', \qquad \sigma_t=\sigma_0e^{\alpha W'_t-\frac{\alpha^2}{2}t}.
  $$ 
  where $\alpha>0$ is the volatility of volatility.

Notice that this model does not satisfy Hypothesis 1, so a truncation argument similar as in Section 5 in Al\`os and Shiraya \cite{Alos2019b} is needed in order to check that Theorem \ref{limskew} is true for this model.
We define $\varphi(x)=\sigma_0 \exp(x)$.  For every  $n>1$, we consider a function $\varphi_n \in C^2_b$  satisfying that $\varphi_n(x)=\varphi(x)$ for any $x \in [-n,n]$, 
$\varphi_n(x) \in [\varphi(-2n) \vee \varphi(x), \varphi(-n)]$ for $x \leq -n$, and $\varphi_n(x) \in  [\varphi(n), \varphi(x) \wedge \varphi(2n)]$
for $x \geq n$. We set
$$
\sigma_t^n=\varphi_n\left(\alpha W'_t-\frac{\alpha^2}{2}t\right).
$$
It is easy to see that  $\sigma_t^n$ satisfies Hypotheses 1, 2, (\ref{d1}), and 4. In  fact, for \(r\leq t\), we have that
$$
D_r^{W'}\sigma^n_t = \varphi_n'\left(\alpha W'_t-\frac{\alpha^2}{2}t\right)\alpha,
$$
which implies that  (\ref{d1}) holds with $H=\frac12$ and Hypothesis 4 is satisfied with $\gamma<1/2$. Therefore, appealing to Theorem \ref{limskew} and using the fact that $\sigma_0^n=\sigma_0$, we conclude that
\begin{equation} \label{limit1}
\lim_{T\to 0}I^n(0,k^{*})=\frac{\sigma_0}{\sqrt{3}}.
\end{equation}
where $I^n$ denotes the implied volatility under the volatility process $\sigma^n_t$. We then write
$$
I(0,k^{*})=I^n(0,k^{*})+I(0,k^{*})-I^n(0,k^{*}).
$$
By the mean value theorem,
\begin{equation*} \begin{split}
I(0,k^{*})-I^n(0,k^{*})&=\partial_\sigma (BS^{-1}(0,X_0,X_0,\xi))(V_0-V_0^n) \\
&=e^{-X_0+\frac{\xi^2 T}{8}}\frac{\sqrt{2\pi}}{\sqrt{T}} (V_0-V_0^n),
\end{split}
\end{equation*}
for some $\xi\in (V_0, V_0^n)$, where $V_0^n$ is the option price under $\sigma^n$.
Thus, for $T \leq 1$ and $n>\alpha^2$,
\begin{eqnarray*}
\vert I(0,k^{*})-I^n(0,k^{*}) \vert &\leq & \frac{C_n}{\sqrt{T}}\E\left (|e^{X_T}-e^{X^n_T}|{\bf1}_{\sup_{s\in [0,T]} \vert \ln (\sigma_s/\sigma_0) \vert >n}\right)\nonumber\\
&\leq & \frac{C_n}{\sqrt{T}}\E [(|e^{X_T}+e^{X^n_T}|^2)]^{1/2} \left[\P \left(\sup_{s\in [0,T]}\vert \ln (\sigma_s/\sigma_0) \vert>n \right)\right]^{1/2}\nonumber\\
&\leq& \frac{C_n}{\sqrt{T}}\left[\P\left(\sup_{s\in [0,T]}|\alpha W_s'-\alpha^2s/2|>n\right)\right]^{\frac12}\nonumber\\
&\leq& \frac{C_n}{\sqrt{T}}\left[\P\left(\sup_{s\in [0,T]}|W_s|>\frac{n}{2\alpha}\right)\right]^{\frac12}\nonumber,\\
\end{eqnarray*}
for some constant $C_n>0$ that changes from line to line.
Then, Markov's inequality 
implies that  for all $p>2$,
$$\vert I(0,k^{*})-I^n(0,k^{*})\vert\leq \frac{C_n}{\sqrt{T}} \left(\frac{2\alpha}{n} \right)^{p/2} \left[\E\left( \sup_{s\in [0,T]}|W_s|^p\right)\right]^{1/2}\leq C_n T^{\frac{p}{2}-\frac12},
$$
 Thus, taking $p>4$, we conclude that $$\lim_{T\to 0} I(0,k^{*}) =\frac{\sigma_0}{\sqrt{3}}.$$

On the other hand, for $s \leq r \leq t$, we have
$$
D_s^{W'} D_r^{W'}\sigma^n_t = \varphi_n^{''}\left(\alpha W'_t-\frac{\alpha^2}{2}t\right)\alpha^2,
$$
which implies that  (\ref{d2}) holds with $H=\frac12$.
Therefore, appealing to Theorem \ref{limskew} we get that
$$
\quad \lim_{T \to 0} \partial_k I^n(0,k^{*}) = \frac{\sqrt{3}\rho \alpha\varphi_n'(\sigma_0)}{5\sigma_0^n} +\frac{\sqrt{3}\sigma_0^n }{30}= \frac{\sqrt{3}\rho \alpha}{5} +\frac{\sqrt{3}\sigma_0 }{30} .
$$
Next, similarly as above we can write
\begin{equation*} \begin{split}
\partial_k I(0,k^{*})=\partial_k I^n(0,k^{*})+\partial_k(I(0,k^{*})-I^n(0,k^{*})).
\end{split}
\end{equation*}
By the mean value theorem,
\begin{equation*} \begin{split}
\partial_k(I(0,k^{*})-I^n(0,k^{*}))&=\partial_\sigma \partial_k (BS^{-1}(0,X_0,X_0,\xi))(V_0-V_0^n) \\
&=-e^{-X_0+\frac{\xi^2 T}{8}}\frac{\sqrt{2\pi}}{2} \xi(V_0-V_0^n),
\end{split}
\end{equation*}
for some $\xi\in (V_0, V_0^n)$. Thus, proceeding as above we conclude that
$$
\quad \lim_{T \to 0} \partial_k I(0,k^{*}) = \frac{\sqrt{3}\rho \alpha}{5} +\frac{\sqrt{3}\sigma_0 }{30} .
$$

We next proceed with some numerical simulations using the following parameters
$$S_0 = 10, \, T=\frac{1}{252}, \, dt=\frac{T}{50}, \, \alpha = 0.5, \, \rho = -0.3, \, \sigma_0 =(0.1, 0.2, \dots, 1.4).$$

In order to get estimates of an Asian call option we use antithetic variates. The estimate of the price is defined as follows
\begin{align}
\begin{split}
\hat{V}_{sabr} &= \frac{ \frac{1}{N}\sum_{i=1}^N V_T^i + \frac{1}{N}\sum_{i=1}^N V_T^{i,A} }{2},
\end{split}
\label{antithetic}
\end{align}
where $N=2000000$ and the sub-index $A$ denotes the value of an Asian call option computed on the antithetic trajectory of a Monte Carlo path.

We use equation \eqref{skew_estimator} in order to get estimates of the skew. In Figure \ref{fig2} we present the results of a Monte Carlo simulation
which aims to evaluate numerically the level and the skew of the at-the-money implied volatility of an Asian call option under the SABR model. Again, the numerical results fit the theoretical ones.

\begin{figure}[h]
\centering
\begin{tabular}{ccc}
  \includegraphics[width=60mm]{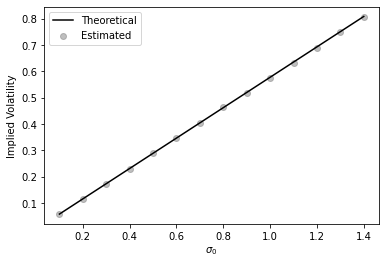} &   \includegraphics[width=60mm]{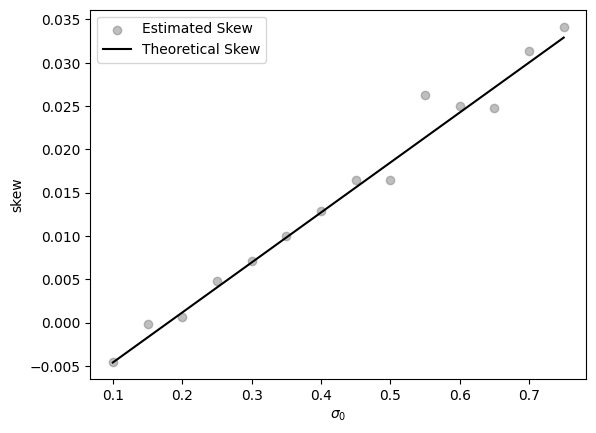} \\[2pt]
\end{tabular}
\caption{At-the-money level and skew of the IV under SABR model.}
\label{fig2}
\end{figure}

Finally, we use the linear approximation \eqref{IVProxy} for higher maturities. We run a Monte Carlo simulation with model parameters $\rho=-0.3, \alpha=0.2$ and $\sigma_0=0.4$. The result is presented in Figure \ref{figSABRIVproxy}. As one can see, taking into account the scale, the asymptotic formula performs well in the case of higher maturities and acts as an upper bound of the implied volatility.  Note that discrepancy increases with the maturity of the option, as expected. 
\begin{figure}[h]
\centering
\begin{tabular}{ccc}
  \includegraphics[width=40mm]{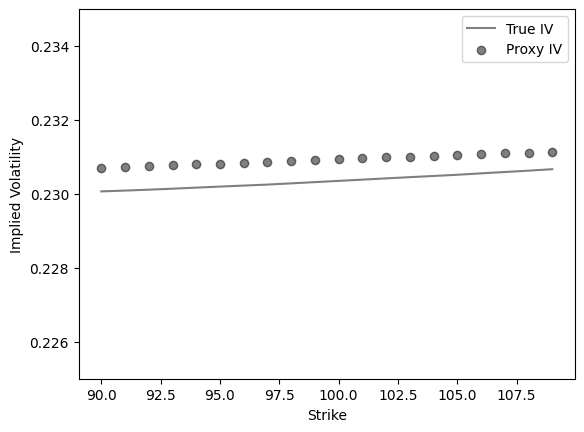} &   \includegraphics[width=40mm]{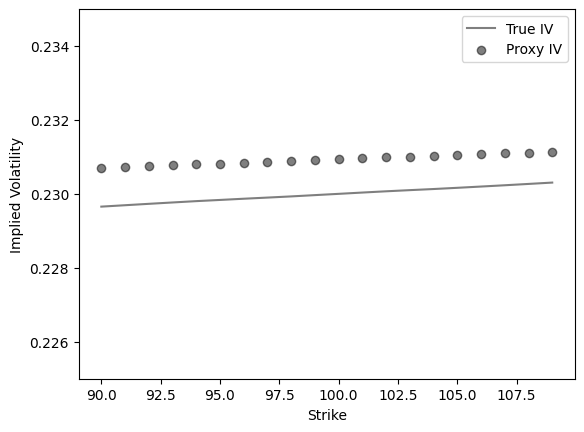} &   \includegraphics[width=40mm]{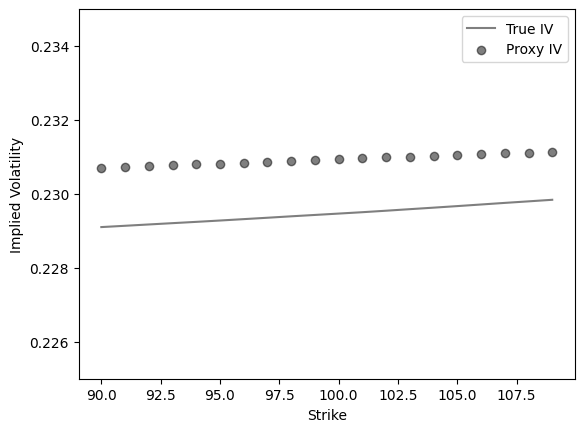}\\
  (a) 3 Months & (b) 6 Months & (b) 1 Year\\[2pt] 
\end{tabular}
\caption{Implied volatility approximation as a function of the strike. The SABR model.}
\label{figSABRIVproxy}
\end{figure}

\subsection{The fractional Bergomi model}\label{rough-bergomi-model}

The fractional Bergomi stochastic volatility model asssumes equation (\ref{bspm}) with
\begin{equation*} \begin{split}
    \sigma_t^2 = \sigma_0^2 e^{v \sqrt{2H}Z_t-\frac{1}{2}v^2t^{2H}}, \quad 
    Z_t = \int_0^t(t-s)^{H-\frac{1}{2}}dW'_s,
    \end{split}
    \end{equation*}
    where $H \in  (0,1)$ and $v>0$, see Example 2.5.1 in Al\`os and Garc\'ia-Lorite \cite{Alos2021a}.
    
    As for the SABR model, a truncation argument is needed in order to apply Theorem \ref{limskew}, as Hypothesis 1 is not satisfied.
We define $\varphi$ and $\varphi_n$ as for the SABR model, and we set
$$
\sigma_t^n=\varphi_n\left(\frac{1}{2} v \sqrt{2H}Z_t-\frac{1}{4}v^2t^{2H}\right).
$$
It is easy to see that  $\sigma_t^n$ satisfies Hypotheses 1, 2, (\ref{d1}), and 4. In fact, for \(r\leq t\), we have that
$$
D_r^{W'}\sigma^n_t = \varphi_n'\left(\frac{1}{2} v \sqrt{2H}Z_t-\frac{1}{4}v^2t^{2H}\right)\frac12 
v \sqrt{2 H} (t-r)^{H-\frac{1}{2}},
$$
which implies that Hypothesis (\ref{d1}) holds  and Hypothesis 4 is satisfied with $\gamma<H$.
Moreover, for $s \leq r \leq t$, we have
$$
D_s^{W'} D_r^{W'}\sigma^n_t = \varphi_n^{''}\left(\frac{1}{2} v \sqrt{2H}Z_t-\frac{1}{4}v^2t^{2H}\right)\frac{1}{4} 
v^2 \sqrt{4 H^4} (t-r)^{H-\frac{1}{2}}(t-s)^{H-\frac{1}{2}},
$$
which implies that  (\ref{d2}) holds.
Therefore, by Theorem \ref{limskew} we get that (\ref{limit1}) holds. Concerning the short maturity limit of the skew, we observe that
$$
\E(D_r^{W'}\sigma_u)=e^{-\frac18 v^2 u^{2H}}\frac12  \sigma_0
v \sqrt{2 H} (u-r)^{H-\frac{1}{2}}.
$$
which gives
\begin{equation} \label{rou}
\lim_{T \to 0} \partial_kI^n(0,k^{*}) 
 =\begin{cases}
\frac{\sqrt{3}\sigma_0 }{30} \quad &\text{if} \quad H>\frac12 \\
\frac{\sqrt{3}\rho v }{10} +\frac{\sqrt{3}\sigma_0 }{30} \quad &\text{if} \quad H=\frac12,
\end{cases}
\end{equation}
and  for $H<\frac12$
\begin{equation} \begin{split}\label{rou2}
&\lim_{T \to 0} T^{\frac{1}{2}-H}\left( \partial_k I^n(0,k^{*})-\frac{\sqrt{3}\sigma_0 }{30}\right)\\
&\qquad \qquad=\frac{3\sqrt{6H}  \rho v}{(1+H-\frac{1}{2}) (2 +H-\frac{1}{2}) (3+H-\frac{1}{2}) (5+H-\frac{1}{2})}.
\end{split}
\end{equation}
Finally, similarly as for the SABR model one can easily show that for $n$ sufficiently large but fixed,
$$
\lim_{T \rightarrow 0} (I(0,k^{*})-I^n(0,k^{*}))=0
$$
and
$$
\lim_{T \rightarrow 0} \partial_k(I(0,k^{*})-I^n(0,k^{*}))=0,
$$
so (\ref{limit1}), (\ref{rou}), and (\ref{rou2}) are also true for $I(0, k^{*})$.

The parameters used for the Monte Carlo simulation are the following
$$
S_0 = 10, \, T=0.001, \, dt=\frac{T}{50}, \, H=(0.4, 0.7), \, v = 0.5, \, \rho = -0.3, \, \sigma_0 =(0.1, 0.2, \dots, 1.4).$$

In order to obtain an estimate of the price of an Asian call option under the fractional Bergomi model we use the combination of antithetic and control variates presented in equations \eqref{controlvariates} and \eqref{antithetic}. That is, we  first sample the process from the  Bergomi model and the antithetic analogue.
We then average the payoffs calculated from both paths.
Finally, use the geometric Asian as control variate assuming constant volatility model at level $\sigma_0$. 

In Figure \ref{fig5} we plot the estimates of the level of the ATMIV of the Asian
call option and we observe that the result is independent of $H$ as stated in Theorem \ref{limskew}.
\begin{figure}[h]
\centering
\begin{tabular}{ccc}
  \includegraphics[width=60mm]{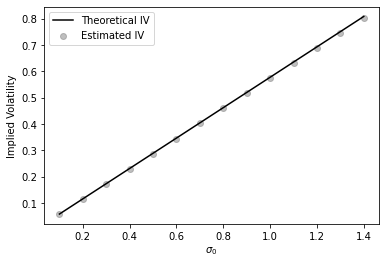} &   \includegraphics[width=60mm]{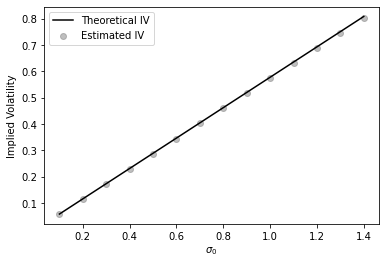} \\
(a) H=0.4 & (b) H=0.7 \\[2pt]
\end{tabular}
\caption{At-the-money level of the IV under fractional Bergomi model}
\label{fig5}
\end{figure}

In Figure \ref{fig6} we simulate the ATMIV skew  of the Asian call option as a function of the maturity as well as its least squares fit in order to observe the blow up to $-\infty$ for the case $H=0.4$.
\begin{figure}[h]
\centering
\begin{tabular}{ccc}
  \includegraphics[width=60mm]{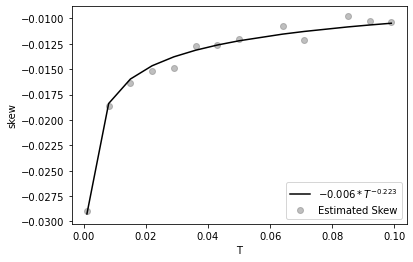} &   \includegraphics[width=60mm]{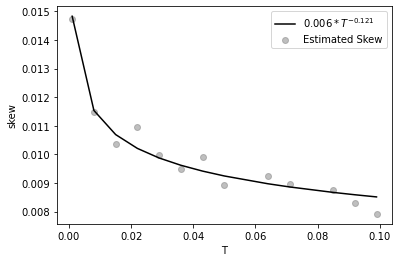} \\
H=0.4 &  H=0.7 \\[2pt]
\end{tabular}
\caption{At-the-money IV skew as a function of $T$ under fractional Bergomi model}
\label{fig6}
\end{figure}

We then plot the quantities $T^{\frac{1}{2}-H}\partial_k \hat{I}(0,k^{*})$ for $H=0.4$ and $\partial_k \hat{I}(0,k^{*})$ for $H=0.7$ in Figure \ref{fig7} as a function of $\sigma_0$.  For $H=0.4$, the line $-0.0243+0.032\sigma_0$ corresponds to the least square fit while formula (\ref{rou2}) gives the line $-0.0286+0.029 \sigma_0$. This difference is due to the  numerical instability of  the finite difference estimation at short maturity in the presence of rough noise and could be improved by increasing considerably the number of Monte Carlo samples or applying  a variance reduction technique.
For $H=0.7$, we observe  that formula (\ref{rou}) fits well the Monte Carlo estimates.
\begin{figure}[h]
\centering
\begin{tabular}{ccc}
  \includegraphics[width=60mm]{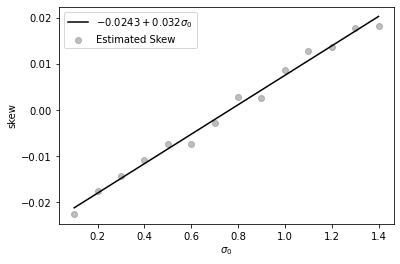} &   \includegraphics[width=60mm]{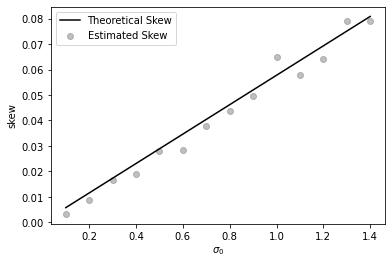} \\
 H=0.4 &  H=0.7 \\[2pt]
\end{tabular}
\caption{At-the-money IV skew as a function of $\sigma_0$ under fractional Bergomi model}
\label{fig7}
\end{figure}

As for the SABR model, we use the approximation \eqref{IVProxy} for higher maturities. We run a Monte Carlo simulation with model parameters $\rho=-0.3, v=0.2, h=0.4$ and $\sigma_0=0.4$. The result is presented in Figure \ref{figRBIVproxy}. We conclude that the approximation performs well since the scale is very small and the approximation acts as an upper bound for an actual implied volatility.
\begin{figure}[h]
\centering
\begin{tabular}{ccc}
  \includegraphics[width=40mm]{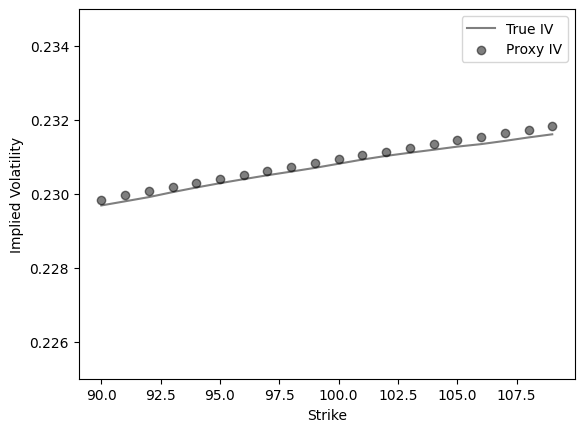} &   \includegraphics[width=40mm]{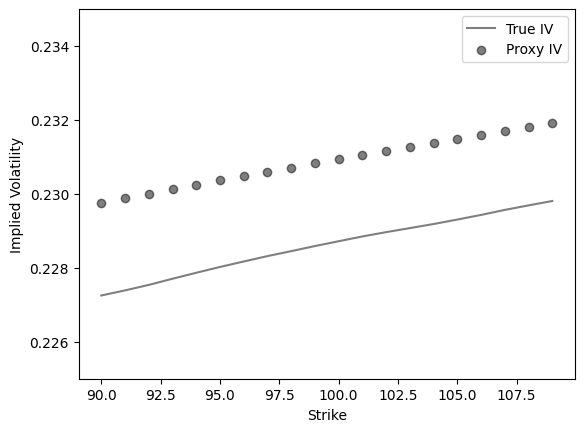} &   \includegraphics[width=40mm]{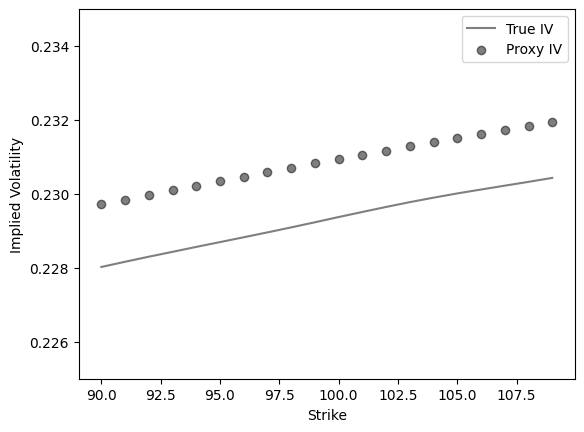}\\
  (a) 3 Months & (b) 6 Months & (b) 1 Year\\[2pt] 
\end{tabular}
\caption{Implied volatility approximation as a function of the strike. Rough Bergomi model.}
\label{figRBIVproxy}
\end{figure}

\subsection{Local volatility model}\label{local-volatility-model}

The short-maturity limit of the ATMIV level and skew of an Asian option under local volatility has already been computed in  Pirjol and Zhu \cite{Pirjol2016}. The aim of this section is 
to check that our Theorem \ref{limskew} provides the same asymptotics as the ones obtained in  that paper.
We  consider the local volatility model
\begin{equation} \label{local}
dS_t=\sigma(S_t) S_t dW_t,
\end{equation}
where \(\sigma(.)\) is a twice
differentiable function. 
In Proposition 19 of Pirjol and Zhu \cite{Pirjol2016} they show that  the following expansion holds for \(x = \log(\frac{K}{S_0})\) around
the ATM point 
 \begin{equation} \label{expan}
\lim_{T \to 0} I(0,k^*)=\frac{\sigma(S_0)}{\sqrt{3}}\left( 1+\left( \frac{1}{10}+\frac{3\sigma'(S_0)}{5\sigma(S_0)}S_0 \right)x + O(x^2) \right).
\end{equation}
Then, differentiating equation (\ref{expan}), we obtain that
\begin{equation} \label{hj}
\lim_{T \to 0} \partial_k I(0,k^{*}) = \frac{1}{\sqrt{3}}\left(\frac{1}{10} \sigma(S_0)+ 
\frac{3}{5}S_0 \sigma'(S_0) \right).
 \end{equation}

We next apply Theorem \ref{limskew} in the case of the  local volatility model (\ref{local}) with  $\sigma_t=\sigma(S_t)$ and $\rho=1$ to verify that  we obtain the same expressions as in (\ref{main1}) and (\ref{hj}). For the level, we directly see that when $\sigma(S_t)$ equals $\sigma_t$ and $K=S_0$, (\ref{expan}) coincides with the limit in (\ref{main1}). 
For the skew, we need to compute $D_r \sigma(S_t)$.  We have for $r \leq u$,
\begin{equation*} \begin{split}
D_r \sigma(S_u)=\sigma'(S_u) D_r (S_u)=\sigma'(S_u) \left(\sigma(S_r) S_r+\int_r^u D_r(\sigma(S_s) S_s) dW_s \right).
\end{split}
\end{equation*}
In particular,
\begin{equation*}
\E\left(D_r \sigma(S_u) \right)=\E\left(\sigma'(S_u)\sigma(S_r) S_r\right).
\end{equation*}
This can be written as 
\begin{equation*} \begin{split}
\E\left(D_r \sigma(S_u) \right)=\sigma'(S_0) \sigma(S_0)S_0 &+\E\left((\sigma'(S_u)-\sigma'(S_0)) \sigma(S_r) S_r\right)\\
& +\sigma'(S_0) \E\left((\sigma(S_r)-\sigma(S_0)) S_r\right).
\end{split}
\end{equation*}
Then,  using the mean value theorem and the fact that $S_t$ has H\"older continuous sample paths of any order $\gamma<\frac12$, we see that the last two terms of the last display will not contribute in the limit (\ref{main2}). Thus, (\ref{main2}) gives
 \begin{align*}
  \begin{split}
\lim_{T \to 0} \partial_kI(0,k^{*})=\frac{\sqrt{3}}{5} S_0 \sigma'(S_0)+\frac{\sqrt{3}\sigma(S_0) }{30},
  \end{split}
\end{align*}
which is the same as in (\ref{hj}). This serves as one more evidence of the validity of
Theorem \ref{limskew}.

\appendix

\section{Computation of Malliavin derivatives}\label{some-malliavin-derivatives}

In this section we provide the computations of the first and second  Malliavin derivatives of the processes \(S_t\),
\(M_t\) and $\phi_t$ defined in Section 2.

Using the fact that $\sigma_t$ is adapted to the filtration of $W'$ and the formula for the derivative of a stochastic integral (see for example (3.6) in Nualart and Nualart \cite{Nua18}), we get that, for $0\leq s \leq r \leq T$,
\begin{align*}
\begin{split}
D_s^{W'}S_r &= S_r\left(\rho \sigma_s-\frac{1}{2}\int_s^rD_s^{W'}\sigma_u^2du +\int_s^rD_s^{W'}\sigma_u dW_u \right),\\
D_s^{B}S_r &= S_r\sigma_s\sqrt{1-\rho^2},\\
D_s^{W'}M_r &= \frac{\rho\sigma_s S_s(T-s)}{T}+\int_s^r \frac{(T-u)D_s^{W'}(\sigma_u S_u)}{T}dW_u,\\
D_s^BM_r &= \frac{\sqrt{1-\rho^2}\sigma_s S_s(T-s)}{T}+\int_s^r \frac{(T-u)\sigma_uD_s^{B}( S_u)}{T}dW_u.
\end{split}
\end{align*}
Moreover, appealing to (\ref{derprod}), we find that
\begin{align*} 
\begin{split}
D_s^WS_r & = \rho S_r\left(-\frac{1}{2}\int_s^rD_s^{W'}\sigma_u^2du +\int_s^rD_s^{W'}\sigma_u dW_u \right) + S_r \sigma_s,\\
D_s^WM_r&=  \frac{\sigma_s S_s(T-s)}{T}+\rho \int_s^r \frac{(T-u)D_s^{W'}(\sigma_u S_u)}{T}dW_u \\
&\qquad \qquad + \sqrt{1-\rho^2}\int_s^r \frac{(T-u)D_s^B(\sigma_u S_u)}{T}dW_u.
\end{split}
\end{align*}
Finally, from the definition of $\phi_t$, we conclude that
\begin{align} \label{a2}
\begin{split}
D_s^W\phi_r&=\frac{\rho (T-r)D_s^{W'}(\sigma_r S_r)}{T M_r} - \frac{\rho(T-r)S_r\sigma_rD_s^{W'}M_r}{T M_r^2} \\
&+ \frac{\sqrt{1-\rho^2} (T-r)D_s^{B}(\sigma_r S_r)}{T M_r} - \frac{\sqrt{1-\rho^2}(T-r)S_r\sigma_rD_s^{B}M_r}{T M_r^2}.
\end{split}
\end{align}

We next compute the second Malliavin derivatives. Similarly as before, using the fact that we can differentiate Lebesgue integrals of stochastic processes (see for example Proposition 3.4.3 in Nualart and Nualart \cite{Nua18}), we get that, for $0\leq s \leq r \leq u \leq T$,
\begin{align}
\begin{split}
D_s^BD_r^{W'}S_u &= S_u \sigma_s \sqrt{1-\rho^2} \left( \rho \sigma_r -\frac{1}{2}\int_r^uD_r^{W'}\sigma_v^2dv + \int_r^uD_r^{W'}\sigma_vdW_v \right),\\
D_s^{W'}D_r^{W'}S_u &= S_u\left(\rho \sigma_s-\frac{1}{2}\int_s^uD_s^{W'}\sigma_v^2dv +\int_s^uD_s^{W'}\sigma_v dW_v \right)\\
&\qquad \qquad\times \left(\rho \sigma_r-\frac{1}{2}\int_r^uD_r^{W'}\sigma_v^2dv +\int_r^uD_r^{W'}\sigma_v dW_v \right) \\
&\qquad +S_u\left( \rho D_s^{W'}\sigma_r-\frac{1}{2}\int_r^uD_s^{W'}D_r^{W'}\sigma_v^2dv +\int_r^uD_s^{W'}D_r^{W'}\sigma_v dW_v \right),\\
D_s^{W'}D_r^BS_u & = \sqrt{1-\rho^2}S_uD_s^{W'}\sigma_r + \sqrt{1-\rho^2}\sigma_rD_s^{W'}S_u,\\
D_s^{W'}D_r^{W'}M_u &= \frac{\rho(T-r)D_s^{W'}(\sigma_r S_r)}{T}+\int_r^u \frac{(T-v)D_s^{W'}D_r^{W'}(\sigma_v S_v)}{T}dW_v,\\
D_s^BD_r^{W'}M_u &= \frac{\rho(T-r)\sigma_rD_s^{B}S_r}{T}+\int_r^u \frac{(T-v)D_s^{B}D_r^{W'}(\sigma_v S_v)}{T}dW_v,\\
D_s^{W'}D_r^BM_u &= \frac{\sqrt{1-\rho^2}(T-r)D_s^{W'}(\sigma_r S_r)}{T}+\int_r^u \frac{(T-v)D_s^{W'}(\sigma_v D_s^{B}S_v)}{T}dW_v,\\
D_s^{B}D_r^BM_u &=\frac{\sqrt{1-\rho^2}(T-r)\sigma_rD_s^{B}( S_r)}{T}+\int_r^u \frac{(T-v)\sigma_v D_s^{B}D_r^{B}S_v}{T}dW_v,
\end{split}
\label{App1}
\end{align}
and
\begin{align}
\begin{split}
&D_s^WD_r^W\phi_u = \frac{\rho^2(T-u)D_s^{W'}D_r^{W'}(\sigma_u S_u)}{T M_u}-\frac{\rho^2(T-u)D_s^{W'}(\sigma_u S_u)D_s^{W'}M_u}{T M_u^2}\\
& -\frac{\rho^2(T-u)D_s^{W'}\left(\sigma_u S_u D_r^{W'}M_u\right)}{T M_u^2} +\frac{2 \rho^2(T-u) \sigma_u S_u D_s^{W'}M_u D_r^{W'}M_u}{T M_u^3}\\
&+\frac{\rho\sqrt{1-\rho^2}(T-u)D_s^{W'}D_r^{B}(\sigma_u S_u)}{T M_u}-\frac{\rho\sqrt{1-\rho^2}(T-u)D_r^{B}(\sigma_u S_u)D_s^{W'}M_u}{T M_u^2}\\
&-\frac{\rho\sqrt{1-\rho^2}(T-u)D_s^{W'}\left(\sigma_u S_u D_r^{B}M_u\right)}{T M_u^2}+\frac{2\rho\sqrt{1-\rho^2}(T-u)\sigma_u S_u D_r^{B}M_uD_s^{W'}M_u}{T M_u^3}\\
&+\frac{\rho\sqrt{1-\rho^2}(T-u)D_s^{B}D_r^{W'}(\sigma_u S_u)}{T M_u}-\frac{\rho\sqrt{1-\rho^2}(T-u)D_r^{W'}(\sigma_u S_u)D_s^{B}M_u}{T M_u^2}\\
&-\frac{\rho\sqrt{1-\rho^2}(T-u)D_s^{B}\left(\sigma_u S_u D_r^{W'}M_u\right)}{T M_u^2}+\frac{\rho\sqrt{1-\rho^2}(T-u)\sigma_u S_uD_r^{W'}M_uD_s^{B}M_u}{T M_u^3}\\
&+\frac{(1-\rho^2)(T-u)D_s^BD_r^B(\sigma_u S_u)}{T M_u}-\frac{(1-\rho^2)(T-u)D_r^B(\sigma_u S_u) D_s^BM_u}{T M_u^2}\\
&-\frac{(1-\rho^2)(T-u)D_s^B\left(S_u \sigma_u D_r^BM_u\right)}{T M_u^2}+\frac{2(1-\rho^2)(T-u)S_u\sigma_uD_r^BM_uD_s^BM_u}{T M_u^3}.
\end{split}
\label{malderL1}
\end{align}

\end{document}